\newcommand{\mat}[1]{\left[\; \begin{matrix} #1 \end{matrix} \:\right]}
\newcommand{\simode}[1]{\left\{\;
\begin{aligned} #1 \end{aligned} \right.}
\newcommand{\bm}[1]{\boldsymbol{#1}}
\newcommand{\tr}{{\sf T}}
\newtheorem{theorem}{Theorem}
\newtheorem{remark}{Remark}
\def\BibTeX{{\rm B\kern-.05em{\sc i\kern-.025em b}\kern-.08em
    T\kern-.1667em\lower.7ex\hbox{E}\kern-.125emX}}
\begin{document}
\title{Relations Between Generalized JST Algorithm and Kalman Filtering Algorithm for Time Scale Generation}
\author{Yuyue Yan, \IEEEmembership{Member, IEEE}, Takahiro Kawaguchi, \IEEEmembership{Member, IEEE}, Yuichiro Yano, \IEEEmembership{Member, IEEE}, \\ 
Yuko Hanado,   and Takayuki Ishizaki,  \IEEEmembership{Member, IEEE} 
\thanks{This work is supported by the Ministry of Internal Affairs and Communications (MIC) under its "Research and Development for Expansion of Radio Resources (JPJ000254)" program. A preliminary version \cite{yanequiva23} of the results of this paper is accepted and will be presented at 
ISPCS 2023, London, England.} 
\thanks{Yuyue Yan and Takayuki Ishizaki are with the Department of Systems and Control Engineering, Tokyo Institute of Technology, Meguro, Tokyo 152-8552 Japan (e-mail:  yan.y.ac@m.titech.ac.jp,  ishizaki@sc.e.titech.ac.jp).
  Takahiro Kawaguchi is with the Division of Electronics and Informatics, Gunma University, Kiryu,
Gunma 371-8510 Japan (e-mail:  kawaguchi@gunma-u.ac.jp).  Yuichiro Yano and Yuko Hanado are with the National Institute of Information and Communications Technology,
Koganei, Tokyo 184-0015 Japan (e-mail: y-yano@nict.go.jp, yuko@nict.go.jp).   
}
\vspace{-10pt}}
\maketitle

\begin{abstract}
In this paper, we present a generalized Japan Standard Time algorithm (JST-algo) for higher-order atomic clock ensembles and mathematically clarify the relations of the (generalized) JST-algo and the conventional Kalman filtering algorithm (CKF-algo) in the averaged atomic time and the clock residuals for   time scale generation. 
In particular, we reveal the fact that the averaged atomic time of the generalized JST-algo does not depend on the observation noise even though the measurement signal is not filtered in the algorithm. 
Furthermore, the prediction error of CKF-algo is rigorously shown by using the prediction error regarding an observable state space.  
It is mathematically shown that when the covariance matrices of system noises are identical for all atomic clocks,  considering equal averaging weights for the clocks is a necessary and sufficient condition to ensure equivalence between the generalized JST-algo and CKF-algo in averaged atomic time.
In such homogeneous systems, a necessary and sufficient condition for observation noises is presented to determine which algorithm can generate the clock residuals with smaller variances. 
A couple of numerical examples comparing the generalized JST-algo and CKF-algo are provided to illustrate the efficacy of the results. 
\end{abstract}

\begin{IEEEkeywords}
Atomic clocks, state-space model, prediction, Kalman filter, time scale, atomic time.
\end{IEEEkeywords}
\vspace{-10pt}
\section{Introduction}
\label{sec:introduction}
\IEEEPARstart{A}{n} 
atomic clock ensemble is a collection of highly accurate atomic clocks that work together to achieve a precise and stable timekeeping system. Atomic clocks are devices that measure time based on the vibrations of atoms with constant resonance frequencies, e.g., cesium and rubidium atoms. 
Even though individual atomic clocks can be accurate, they still have some tiny variations due to the environmental factors like temperature fluctuations, external electromagnetic fields, and quantum mechanical effects.
By combining the measurements from multiple atomic clocks within an ensemble, the national metrology institutes (NMIs) all over the world can reduce these variations and create a more reliable and robust timekeeping system   \cite{galleani2010time,chan2009self,liu2021improving}.
The advancements in atomic clock technology and the development of accurate time scales based on these ensembles offer numerous benefits and applications that are crucial for the future smart society, e.g., 
satellite navigation \cite{wu2015uncertainty},  
financial networks with high-frequency trading and  
time-sensitive transactions  \cite{mulvin2017media},
telecommunications \cite{seidelmann2011time},  etc. 

The variations in tick rates of atomic clocks are referred to as time deviations from the ideal clock behavior, which can be modeled as stochastic processes.
To improve the accuracy of time scales, the researchers obtained experimental evidence for modeling the behavior of atomic clocks and the time deviations the clocks experience as a series of linear stochastic differential equations \cite{zucca2005clock}.
Based on this, in the task of time generation, how to properly deal with the prediction problem for time deviations is the main issue to guarantee excellent performance of the generated time \cite{galleani2010time}. 
The algorithm that deals with such a prediction problem of time deviations is referred to as the algorithm of averaged atomic time.

The Kalman filter is a mathematical method used for state estimation/prediction in control theory and signal processing  to estimate the state of a dynamic system based on a series of noisy measurements.
It is well known that the key advantage of the Kalman filter is its ability to provide an optimal estimate by dynamically balancing the trade-off between the model predictions and the actual measurements, effectively reducing the impact of observation noises \cite{valenti2015linear,song2021robust,baraldi2012kalman}. 
In time and frequency community, the Kalman filter has been constructed as the algorithm of averaged atomic time using the difference of clock reading between two clocks as measurement signals \cite{galleani2003use,galleani2010time,greenhall2012review,trainotti2022detection,mostafa2020enhancing}. 
However, 
it is reported that one may face the numerical instability problem of the Kalman filter in the atomic clock ensembles \cite{brown1991theory,galleani2010time,yan2023structured,yandecom23}.
This is because practical implementations often ignore the fact that the dynamic system of atomic clock ensembles is undetectable\footnote{One does not have sufficient information to determine the full state of the system, and the unobservable portion of the state is not stable.}, 
whereas detectability is a necessary condition ensuring asymptotic convergence of error covariances 
\cite{de1986riccati,auger2013industrial}.
In atomic clock ensembles, since the detectability condition is broken, 
the computational errors in error covariances of the Kalman filter grow unboundedly and hence lead to numerical instability problem \cite{greenhall2012review}.  

Except for the CKF-algo,   
the JST-algo \cite{hanado2006improvement}, is specific to Japan's timekeeping system governed by the National Institute of Information and Communications Technology (NICT), 
while Japan is known for its advanced technology and precision timekeeping capabilities.
The detailed algorithm can be found in  \cite{hanado2006improvement,banerjee2008time,hanado2011overview} (and the reference therein) and is applicable for the atomic clock ensembles with second-order clocks. 
But there is no literature discussing the relation between the JST-algo and the CKF-algo. 
It is interesting to compare the two methods and ask when they are equal and which method is superior to the other.
  
In this paper, we mathematically clarify the relation among the algorithms of averaged atomic time by the CKF-algo and JST-algo. 
Different from the preliminary version \cite{yanequiva23}, we generalize the results for higher-order atomic clock ensembles.
Specifically, we generalize the existing JST-algo to a generalized form for the atomic clock ensembles with higher-order models, where the proposed generalized JST-algo is reduced to the existing JST-algo for second-order clocks. 
By theoretically analyzing the generalized JST-algo in state-space model, we reveal the fact that the averaged atomic time of JST-algo does not depend on the observation noise even though the measurement signal is not filtered in the algorithm. 
Furthermore, using observable Kalman canonical decomposition, the prediction error of CKF-algo is derived.   
We show the fact that when the covariance matrix of the system noises is identical for all atomic clocks, considering equal averaging weights for the clocks is a sufficient and necessary condition guaranteeing equivalence between the generalized JST-algo and CKF-algo in averaged atomic time.
In addition, different from the preliminary version \cite{yanequiva23}, we further clarify the relation between JST-algo and CKF-algo in individual clock residuals for the homogeneous clock ensemble, and present the sufficient and necessary conditions for observation noises to determine which algorithm can generate the clock residuals with smaller variances.

\noindent \textbf{Notation}~We write $\mathbb{R}$ for the set of real numbers, $\mathbb{R}_+$ for the set of positive real numbers, 
$\mathbb{R}^{n}$ for the set of \emph{n}$\times 1$  real column vectors,  and $\mathbb{R}$$^{n \times m}$ for the set of \emph{n}$\times m$ real matrices.
Moreover, $\otimes$ denotes the Kronecker product, 
$(\cdot)^\tr$ denotes transpose,   
$(\cdot)^\dagger$ denotes the Moore-Penrose pseudoinverse,   and
$ \mathsf{diag}(\cdot)$ denotes a diagonal matrix. 
Furthermore, $ \mathbb{E}[x]$  and $\mathbb{C}[x]$  denotes the mean value and covariance of a random variable $x$,  
Finally,  $\mathds{1}_n $ and $I_{n}$  denote the all-ones column vector
and the identity matrix of dimension $n$, respectively, and
\[
e_{1:n-1}:=\mat{I_{n-1}\\0},\quad K^0:=\mat{0 & 0 \\ 0 & K}.
\]

\section{Preliminaries}\label{sec:pre} 
\subsection{Atomic Clock}\vspace{-2pt}
Consider an atomic clock ensemble composed of $m$ clocks.
Each clock works as an independent oscillator that generates a sinusoidal signal.
The number of waves is counted as its clock reading which is slightly different from the ideal clock reading where the difference is referred to as the time deviation  (or, equivalently, so-called phase deviation).
Depending on the material property of the atomic clocks,   
the time deviation  of clock $j$ from the ideal clock is known to satisfy the $n$th order stochastic differential equation given by 
 \begin{equation} \label{eq:intmodel} 
     \Delta h^j(t) \! =\!  \sum_{i=1}^n \frac{\alpha_{i}^jt^{i-1} }{(i-1)!}
    + \sum_{i=1}^n \! \int_0^t   \!  \! \int_0^{t_1}  \! \!  \!   \cdots  \!  \!  \int_0^{t_{i-1}} \!   \! \xi_i^j(t_i) 
    dt_i \cdots dt_2 dt_1,
\end{equation}
where $\alpha_{i}^j\in\mathbb R$, $i=1,\ldots,n$, are the parameters with respect to the initial state of clock $j$, and  $ \xi_i^j\in\mathbb R$, $i=1,\ldots,n$, denote $n$ independent one-dimensional Gaussian random noises with the variance given by $\sigma_i^j\geq0$, $i=1,\ldots,n$.
For example, it is often assumed that the order $n$ is given by $n=2$ for Cesium clocks  \cite{zucca2005clock}.

\subsection{Task of Time Generation}\vspace{-2pt}
Consider a discrete-time sequence $\{t_{kT}\}_{k=0,1,2,\ldots}$ with an operating period $T\in\mathbb R_+$.
In practice, it is usually assumed that a reference signal such as UTC$(t)$ (Coordinated Universal Time) is available as an external source for the local time scale generation only at $t=kT$, 
where UTC$(t)$ can be regarded as the (approximated) ideal time.
Therefore, the time deviations $\{\Delta h^j(t)\}$ are available at $t=kT$.
However, during the time interval $t\in(t_0,t_T)$, the time deviations of the atomic clocks are never measurable because the ideal time is not available. 

Even though the time deviations of  clocks are immeasurable, 
the difference $y_{ij}=\Delta h^i(t)-\Delta h^j(t)=h^i(t)-h^j(t)$ between  clocks $i$ and $j$ is the measurable signal in the clock ensemble, where
$h^i(t)$ represents the actual clock reading of clock $i$.
The key to generating a time scale is predicting the immeasurable time deviations of the atomic clocks using the measurements during the time interval $t\in(t_0,t_T)$.
The generated time is given as
\begin{align}\label{eq:gt}  
\hat h_0(t)=\sum_{i=1}^m \beta_i\left[h^i(t)-\Delta\hat h^i(t)\right]\in\mathbb R,\quad t\in(t_0,t_T),
\end{align}  
where $\beta:=(\beta_1,\cdots,\beta_m)^{\tr}$ denotes the weights chosen based on the reliability of the individual clocks with $\beta_1+\ldots+\beta_m=1$.
Here, since the ideal clock reading $h_0(t)$ can be expressed as $h_0(t)=\hat h_0(t)$ with $\Delta\hat h^i(t)$ replaced by $\Delta h^i(t)$,
the accuracy of the generated time \eqref{eq:gt}   is evaluated by averaged \emph{atomic time} (so-called the ensemble time scale in \cite{galleani2010time})
\begin{align}\label{eq:TA} 
{\rm TA}(t) := \hat h_0(t) - h_0(t)
=\sum_{i=1}^m \beta_i
[\Delta h^i(t)-\Delta \hat h^i(t)],
\end{align}
which is the weighted  prediction error of time deviations. 
The structure of the clock ensemble is summarized in Fig.~\ref{structure} below.

\begin{figure}
\centering
\includegraphics[width = .96\linewidth]{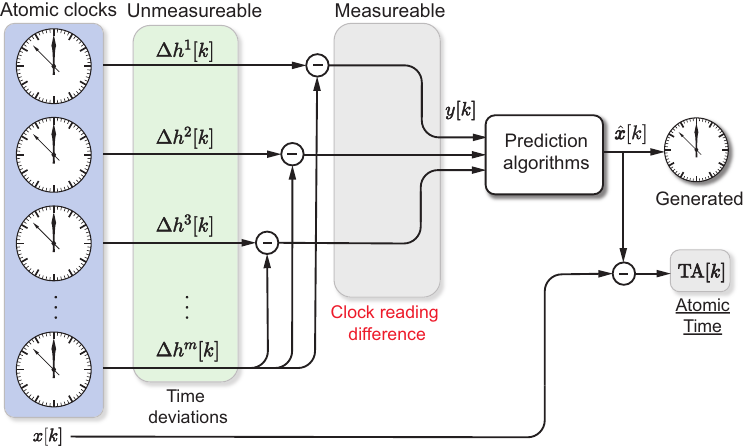}
\caption{Structure of time scale generation for an $m$-clock ensemble.}
\label{structure}
\end{figure}
\vspace{-2pt}
\subsection{State-Space Model of Clock Ensemble } 
Without loss of generality, adopting clock $m$ as the reference clock in measurements, the $n$th order model \eqref{eq:intmodel} of the $m$-clock ensemble in the discrete-time sequence $\{t_k\}_{k=0,1,\ldots,T}$ during the operating interval $t\in(t_0,t_T)$ with sampling period $\tau_k=t_{k+1}-t_k\in\mathbb R_+$, $k= 0,1,\ldots,T$, is equivalent to
\begin{equation}\label{eq:Ndmodel}
\Sigma:    \simode{
    \bm{x}[k+1] &=  \bm F[k]\bm{x}[k] + \bm{v}[k]  \\
    \bm{y}[k] &= \bm H \bm{x}[k] + \bm{w}[k] \\
    \bm{x}_{\rm ens}[k] &=(I_n\otimes \beta^{\tr})\bm{x}[k]
    }
\end{equation}   
where 
$\bm{x}[k]:= (\bm{x}_1^\tr[k],  \ldots, \bm{x}_n^\tr[k])^\tr\in\mathbb R^{nm}$ is the ensemble state with
$\bm{x}_i[k]:=(x^1_i[k] ,\ldots, x^m_i[k])^\tr\in\mathbb R^{m}$, $x_{i}^j[0] = \alpha_{i}^j$ for $i=1,\ldots, n$, $j= 1,\ldots, m$;
$\bm{y}[k]= (y_{1m}[k],  \ldots, y_{(m-1)m}[k])$  is the measurement of the ensemble  including the observation noise $\bm{w}[k]\in\mathbb R^{m-1}$.
In particular, the system   matrix and the observation matrix 
\begin{align} 
\bm F[k]:= A[k]\otimes I_m, \quad
\bm H:= C \otimes \overline{V}
\end{align}  
of \eqref{eq:Ndmodel} are  defined as
\begin{align} 
A[k]:=&A(\tau_k):=\mat{
    1 & \tau_k & \frac{\tau_k^2}{2} & \cdots & \tfrac{\tau_k^{n-1}}{(n-1)!} \\
    0 & 1 & \tau_k & \cdots & \frac{\tau_k^{n-2}}{(n-2)!} \\
    \vdots & & \ddots & \ddots & \vdots \\
    \vdots & & & 1 & \tau_k \\
    0 & 0 & \cdots & \cdots & 1
    }  
\\ 
C :=& \mat{1 & 0 & \cdots & 0}\in\mathbb R^{1\times n}
\\ 
\overline{V}:=&\mat{I_{m-1}& -\mathds{1}_{m-1}}\in\mathbb R^{(m-1)\times m}.
\end{align}  
The signal $\bm{v}[k]=(\bm{v}_1^\tr[k],  \ldots, \bm{v}_n^\tr[k])\in\mathbb R^{nm}$ 
represents the system noise with 
$\bm{v}_i[k]:=(\bm v^1_i[k] ,\ldots, \bm v^m_i[k])^\tr\in\mathbb R^{m}$,
where $\bm{v}^j[k]:=(\bm v^j_1[k],\ldots ,\bm v^j_n[k])^\tr\in\mathbb R^{n}$ is the
Gaussian noise, that comes from the individual noise 
$\xi_1^j,\ldots$, 
$\xi_n^j$,  defined as 
\begin{equation}
\bm{v}^j[k] :=  \int_{0}^{\tau_k} A(t_{k+1}-t)[\xi^j_1(t),\ldots, \xi^j_n(t)]^{\tr}dt.
\end{equation}
In this state space model, the state 
${\bm{x}}_1=(\Delta h^{1},\ldots,\Delta h^{m})^\tr\in\mathbb R^{m}$ 
represents the vector of the time deviation of the clocks 
and $\bm{x}_{\rm ens}[k]\in\mathbb R^n$ represents the (weighted) ensemble state 
so that $C\bm{x}_{\rm ens}[k]\in\mathbb R$ (or equivalently, $\beta^{\tr}{\bm{x}}_1[k]\in\mathbb R$) denotes the ensemble time deviation. 

Thus, the atomic time ${\rm TA}(t)$ at $t=t_k$ can be expressed by
\begin{equation}\label{eq:TA2} 
{\rm TA}[k]=C\bm{x}_{\rm ens}[k]-C\hat{\bm{x}}_{\rm ens}[k]
= C\bm{\epsilon}_{\rm ens}[k]
\end{equation}
where
$\bm{\epsilon}_{\rm ens}[\cdot]=(I_n \otimes \beta^{\tr})\bm{\epsilon}[\cdot]\in\mathbb R^n$ is the ensemble prediction error from the prediction error $\bm{\epsilon}[\cdot]:=\bm{x}[\cdot] - \hat{\bm{x}}[\cdot]\in\mathbb R^{nm}$.
\vspace{-5pt}

\subsection{Algorithm of Averaged Atomic Time for Japan Standard Time (JST-algo)}
Similar  to the other standard time in the world,
JST is generated by integrating about 20 high-precision atomic clocks, including hydrogen-maser clocks, cesium-beam atomic clocks, and optical lattice clocks, where the atomic clocks are assumed to be in second order.
The pseudocode of JST-algo is shown in Algorithm~\ref{Hanado's loop} above, which is only applicable for the ensemble with the second-order model \eqref{eq:intmodel}  of the clocks, i.e, $n=2$.
The principle of JST-algo is explained in the following.

\begin{algorithm}[t]
\caption{JST-algo \cite{hanado2006improvement}}
\begin{algorithmic}[1]
\State {\bf Initialization}: $\Delta\hat{h}^i[0]\approx\Delta{h}^i[0]$, $\hat \alpha_2^i\approx\alpha_{2}^i$, and $k=1$
\While{$k\leq T$}
\For{$i=1,\ldots,m$} \Comment{Prediction}
\State $\Delta\hat{h}^i[k]=\Delta\hat{h}^i[k-1]+\hat \alpha_2^i\tau_{k-1}$
\EndFor
\State $\Delta\hat{h}^m[k]=\sum\nolimits_{i=1}^m \beta_i\left(\Delta\hat{h}^i[k]-y_{im}[k]\right)$ \Comment{Weighting}
\For{$i\not=s$}
\State $\Delta\hat{h}^i[k]=\Delta\hat{h}^m[k]+y_{im}[k]$ \Comment{Update}
\EndFor
\State $k\leftarrow k+1$
\EndWhile
\end{algorithmic}\label{Hanado's loop}
\end{algorithm}

\subsubsection{Prediction procedure}
Consider $n=2$, ignoring the terms of Gaussian random noises in \eqref{eq:intmodel}, 
we have
\begin{equation}\label{eq:motivation0}
    \Delta h^{j}(t) \approx \alpha_{1}^j + \alpha_{2}^j t.
\end{equation} 
It turns out that
\begin{equation}\label{eq:motivation1}
    \Delta h^{j}(t_{k+1}) \approx \Delta h^{j}(t_{k}) + \alpha_{2}^j \tau_k.
\end{equation} 
where the initial conditions $\alpha_{i}^j$, $i=1,2$, of the atomic clocks can be estimated using some identification methods based on the external reference UTC$(t)$, $t=kT$, and the operating interval $T$. 
For example, we can take 
\begin{align}
\alpha_{1}^j  &\approx \hat \alpha_1^j = \Delta h^{j}(t_0)
\\ \label{eq:motivation2}
\alpha_{2}^j  &\approx \hat \alpha_2^j = \frac{\Delta h^{j}(t_0) - \Delta h^{j}(t_0-T)}{T}
\end{align} 
where $\hat \alpha_2^j$ is referred to as the predicted rate in frequency, and $\Delta h^{j}(t)=h^{j}(t)-$UTC$(t)$, $t=kT$.
Thus, the time deviations during the operating interval $t\in(t_0,t_T)$ can be predicted by
\begin{align}\label{eq:JST1}
    \Delta \hat h^{j}(t_{k+1}) &= \Delta \hat h^{j}(t_{k}) + \hat \alpha_2^j \tau_k, \ \ j=1,\ldots,m. 
\end{align} 

\subsubsection{Weighting and updating procedure} \label{sec:Weighting}
In addition to the above prediction procedure, JST-algo includes  weighting and updating procedures associated with measurements to equalize the nonequal clock residuals $\epsilon_i(t):=\Delta h^i(t)-\Delta \hat h^i(t)$ to avoid discontinuities as much as possible.
This is because, without such a procedure, the discontinuity that appeared in the averaged atomic time when one of the clocks leaves the ensemble may give rise to instability \cite{hanado2006improvement}.  
In JST-algo,
the predicted values \eqref{eq:JST1} are to be modified as
\begin{align}\label{eq:JST2}
\Delta \hat h^{m}(t_{k+1}) 
&= \sum\nolimits_{i=1}^m \beta_i\left(\Delta \hat h^{i}(t_{k+1}) -y_{im}(t_{k+1})\right)
\\ \label{eq:JST3}
\Delta \hat h^{i}(t_{k+1}) 
&=  \Delta \hat h^{m}(t_{k+1})+y_{im}(t_{k+1}), \quad i\not=m
\end{align}
where  $\Delta\hat h^{i}(t_{k+1})$ in right-hand side of \eqref{eq:JST2} is understood the one in \eqref{eq:JST1}.

If there is no observation noise,  
the modified clock residuals after the 2-step procedure \eqref{eq:JST2} and \eqref{eq:JST3} 
are equalized as the averaged atomic time ${\rm TA}(t_{k+1})$ after the prediction procedure \eqref{eq:JST1} for all the clocks.
This can be verified as
\begin{align}\nonumber
   \epsilon_m(t_{k+1}) &= \Delta h^m(t_{k+1})-\!\sum\nolimits_{i=1}^m\! \beta_i\!\left(\Delta \hat h^{i}(t_{k+1})\!-\!y_{im}(t_{k+1})\right)
   \\ \nonumber
   &=\sum\nolimits_{i=1}^m \!\beta_i\left(\Delta h^{m}(t_{k+1})\!-\!\Delta \hat h^{i}(t_{k+1}) \!+\!y_{im}(t_{k+1})\right)\\  \nonumber
   &=\sum\nolimits_{i=1}^m \!\beta_i\left(\Delta h^{i}(t_{k+1})-\Delta \hat h^{i}(t_{k+1}) \right),
   \\ \nonumber
\epsilon_i(t_{k+1}) &= \Delta h^i(t_{k+1})-\Delta \hat h^{m}(t_{k+1})+y_{im}(t_{k+1})
  \\ \nonumber
   &=\Delta h^m(t_{k+1})-\Delta \hat h^{m}(t_{k+1})
     =\epsilon_m(t_{k+1}), \quad i\not=m.
\end{align}

\subsection{Algorithm of Averaged Atomic Time by Conventional Kalman Filter}\vspace{-1pt}
Besides JST-algo, there is another famous algorithm of averaged atomic time based on the Kalman filter. Specifically, 
the CKF-algo for the time scale generation in the operating interval $t\in(t_0,t_T)$ is given by
\begin{align} \label{eq:CKF1}
  & \bm K_k = \bm P_k {{\bm H} ^{\sf T}\bigr(\bm{H}\bm P_k{ \bm H}^{\sf T}+\hat R\bigr)^{-1}}  \\  \label{eq:Pk}
  &\bm P_{k+1}=    \bm F[k] (\bm P_k-\bm K_k {\bm H}\bm P_k) \bm F[k]^{\sf T} +\hat W  \\
  \label{eq:CKF2}
  &\hat{\bm x}[k+1] =\bm F[k] \hat{\bm x}[k] + \bm K_k(\bm{y}[k]-\bm{H}\hat{\bm x}[k]) 
\end{align}
with the initial $\bm P_0=pI_{nm}$ for some constant $p\in\mathbb{R}_+$ 
and the guess of the initial state 
$\hat{\bm{x}}[0]=(\hat{\bm{x}}_1^\tr[0],  \ldots, \hat{\bm{x}}_n^\tr[0])^\tr\approx{\bm{x}}[0]$, 
where $\bm K_k$ and $\bm P_k$ 
are the Kalman gain and error covariance, respectively, 
$\hat R$ and $\hat W$ are the guesses of the covariance in observation noise 
$ \bm{w}[k]$ and system noise $\bm{v}[k]$, respectively.
Then, the predicted time deviations  $\hat{\bm{x}}_1$ is hence expressed by
\begin{align}
\hat{\bm{x}}_1[k]= \bigl(C \otimes I_m\bigr)\hat{\bm{x}}[k].
\end{align}

\section{Main Results}\label{sec:moddes}

\subsection{Generalized JST-algo via State-Space Model}  

In this section, we present the generalized JST-algo for the atomic clock ensemble with higher-oder clocks, i.e., $n\geq2$, where the pseudocode of the generalized JST-algo is shown in Algorithm~\ref{Genearlized_JST} below.
The only difference between Algorithms~\ref{Hanado's loop} and~\ref{Genearlized_JST} is in that 
the prediction procedure \eqref{eq:JST1} for the time deviation ${\bm{x}}_1=(\Delta h^{1},\ldots,\Delta h^{m})^\tr$ is generalized as
\begin{equation}    
\simode{\hat{\bm{x}}[k+1] &= \bm F[k]\hat{\bm{x}}[k]\\
\label{eq:JST_new1}
\hat{\bm{x}}_1[k+1]&= \bigl(C \otimes I_m\bigr)\hat{\bm{x}}[k+1]
}
\end{equation}  
which is compatible with \eqref{eq:JST1} for the second-oder clocks since 
$\hat{\bm{x}}_2[k]=\hat{\bm{x}}_2[0]=(\hat{\alpha}_{2}^1,\ldots,\hat{\alpha}_{2}^m)^\tr$ stands for the set of predicted rate in frequency in \eqref{eq:motivation2} for any $k=0,1,\ldots,T$.

\begin{algorithm}[t]
\caption{Generalized JST-algo}
\begin{algorithmic}[1]
\State {\bf Initialization}: $\hat{\bm{x}}[0]\approx {\bm{x}}[0]$, and $k=1$
\While{$k\leq T$} 
\State $\hat{\bm{x}}[k] = \bm F[k-1]\hat{\bm{x}}[k-1]$  \Comment{Prediction}
\State $(\Delta \hat h^{1}[k],\ldots,\Delta \hat h^{m}[k])^\tr=\bigl(C \otimes I_m\bigr)\hat{\bm{x}}[k]$
\State $\Delta\hat{h}^m[k]=\sum\nolimits_{i=1}^m \beta_i\left(\Delta\hat{h}^i[k]-y_{im}[k]\right)$ \Comment{Weighting}
\For{$i\not=m$}
\State $\Delta\hat{h}^i[k]=\Delta\hat{h}^m[k]+y_{im}[k]$ \Comment{Update}
\EndFor
\State $k\leftarrow k+1$
\EndWhile
\end{algorithmic}\label{Genearlized_JST}
\end{algorithm}

In the state space form, the weighting procedure \eqref{eq:JST2} along with the prediction procedure \eqref{eq:JST_new1} can be expressed by
\begin{equation} 
\Delta \hat h^{m}[k+1] =\beta^{\tr}\Bigl\{
				   \bigl(C \otimes I_m\bigr)\bm F[k]\bm{x}[k] - e_{1:m-1} \bm{y}[k+1]
				   \Bigr\}
\end{equation}
and hence the individual predicted time deviation of the other clocks in \eqref{eq:JST2} is subsequently updated by
\begin{align}\nonumber
    \Delta \hat h^{i}[k+1] =& \Delta \hat h^{m}[k+1]+y_{im}[k+1]\\  \nonumber
  				   =& \beta^{\tr}\Bigl\{
				                       \bigl(C \otimes I_m\bigr)\bm F[k]\bm{x}[k] - e_{1:m-1} \bm{y}[k+1]
				                       \Bigr\}   \\
				       &+ y_{im}[k+1], \quad i\not=m.
\end{align}
As a result, the generalized JST-algo is expressed as
\begin{align} \nonumber
\hat{\bm{x}}_1[k+1] =
& \mathds{1}_m \beta^{\tr}
\Bigl\{
    \Bigl(C \otimes I_m\Bigr)\bm F[k]\hat{\bm{x}}[k] - e_{1:m-1} \bm{y}[k+1]
\Bigr\} \\ \label{eq:Hanado1_general}
&+ e_{1:m-1} \bm{y}[k+1],  \\ \label{eq:Hanado2_general}
\hat{\bm{x}}_{2:n}[k+1] =
& \bigl(  {A}_{2:n} [k] \otimes I_m \bigr) \hat{\bm{x}}_{2:n}[k]
 \end{align} 
where ${A}_{2:n}[k]$ is the dimension-reduced matrix of the matrix ${A} [k]$ by removing the first row and column.

\subsection{Equivalence of  The Generalized JST-algo and CKF-algo in Averaged Atomic Time}  
In this section, we reveal the equivalence between JST-algo and CKF-algo in averaged atomic time ${\rm TA}[k]$.
Specifically, we begin with a fundamental theoretical analysis of JST-algo.

\begin{theorem}   \label{prop:hanado}
Consider the system model \eqref{eq:Ndmodel} for an $m$-clock ensemble.
For a given initial guess $\hat{\bm x}[0]$, it follows that the averaged atomic time ${\rm TA}[k]$ of the generalized JST-algo
\begin{equation}  \label{eq:weq}
    {\rm TA}[k]= C \bm{\epsilon}_{\rm ens}[k]  , \quad k=0,1,\ldots, T,
\end{equation}
does not depend on the observation noise 
$\bm{w}[\cdot]$ 
for any weight $\beta$ 
satisfying $\beta_1+\ldots+\beta_m=1$.
In addition, the ensemble prediction error 
$\bm{\epsilon}_{\rm ens}$ 
is given by
\begin{equation} \label{eq:hatepsilon}
\bm{\epsilon}_{\rm ens}[k+1]   
= A[k] \bm{\epsilon}_{\rm ens}[k]
  +\bigl( I_n \otimes \beta^{\tr} \bigr) \bm{v}[k]
\end{equation}
 \end{theorem}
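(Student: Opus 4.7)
The plan is to compute the ensemble quantity $\hat{\bm{x}}_{\rm ens}[k+1]:=(I_n\otimes\beta^{\tr})\hat{\bm{x}}[k+1]$ directly from the algorithm equations \eqref{eq:Hanado1_general}--\eqref{eq:Hanado2_general}, show that the observation-dependent terms cancel by virtue of $\beta^{\tr}\mathds{1}_m=1$, obtain a noise-free linear recursion on $\hat{\bm{x}}_{\rm ens}$, and then subtract from the analogous recursion for the true ensemble state ${\bm{x}}_{\rm ens}[k]:=(I_n\otimes\beta^{\tr}){\bm{x}}[k]$ to land on \eqref{eq:hatepsilon}. Since \eqref{eq:weq} is just a restatement of \eqref{eq:TA2} with $\hat h_0$ replaced by the JST estimate, once \eqref{eq:hatepsilon} is in hand the independence from $\bm{w}[\cdot]$ follows immediately.

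First I would left-multiply \eqref{eq:Hanado1_general} by $\beta^{\tr}$. Because $\beta^{\tr}\mathds{1}_m=1$, the coefficient of the bracket becomes $1$, so
\[
\beta^{\tr}\hat{\bm{x}}_1[k+1] = \beta^{\tr}(C\otimes I_m)\bm F[k]\hat{\bm{x}}[k] - \beta^{\tr} e_{1:m-1}\bm{y}[k+1] + \beta^{\tr} e_{1:m-1}\bm{y}[k+1],
\]
and the two $\bm{y}[k+1]$ terms cancel exactly. Using the mixed-product rule $(1\otimes\beta^{\tr})(C\otimes I_m)=C\otimes\beta^{\tr}$ and $(C\otimes\beta^{\tr})(A[k]\otimes I_m)=CA[k]\otimes\beta^{\tr}=CA[k](I_n\otimes\beta^{\tr})$, this gives
\[
\beta^{\tr}\hat{\bm{x}}_1[k+1]=CA[k]\,\hat{\bm{x}}_{\rm ens}[k].
\]
Next I would apply $(I_{n-1}\otimes\beta^{\tr})$ to \eqref{eq:Hanado2_general} and use the same Kronecker identity to obtain $(I_{n-1}\otimes\beta^{\tr})\hat{\bm{x}}_{2:n}[k+1]=A_{2:n}[k](I_{n-1}\otimes\beta^{\tr})\hat{\bm{x}}_{2:n}[k]$. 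Stacking the two pieces and recognizing that the first row of $A[k]$ is exactly $CA[k]$ while its lower-right block is $A_{2:n}[k]$ (since $A[k]$ is block-triangular with $A[k]_{i1}=0$ for $i\geq2$), I get the clean recursion
\[
\hat{\bm{x}}_{\rm ens}[k+1]=A[k]\,\hat{\bm{x}}_{\rm ens}[k],
\]
which is notably free of $\bm{w}[\cdot]$.

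For the true ensemble state, I would apply $(I_n\otimes\beta^{\tr})$ to the state equation $\bm{x}[k+1]=\bm F[k]\bm{x}[k]+\bm{v}[k]$ in \eqref{eq:Ndmodel} and use the same mixed-product identity to obtain $\bm{x}_{\rm ens}[k+1]=A[k]\bm{x}_{\rm ens}[k]+(I_n\otimes\beta^{\tr})\bm{v}[k]$. Subtracting the two recursions yields \eqref{eq:hatepsilon}, and then \eqref{eq:weq} is just \eqref{eq:TA2} applied to this JST estimate. Since the driving signal of the $\bm{\epsilon}_{\rm ens}$ recursion is only the system noise $\bm{v}[k]$ (and the initial error $\bm{\epsilon}_{\rm ens}[0]$), ${\rm TA}[k]=C\bm{\epsilon}_{\rm ens}[k]$ is manifestly independent of $\bm{w}[\cdot]$.

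The only step that requires real care is the first one: tracking the Kronecker factorizations precisely so that the cancellation $\beta^{\tr}\mathds{1}_m\beta^{\tr}e_{1:m-1}\bm{y}[k+1]=\beta^{\tr}e_{1:m-1}\bm{y}[k+1]$ is visible, and verifying that the block-triangular structure of $A[k]$ lets the two separately derived recursions (for the first row and for rows $2{:}n$) combine into the single equation $\hat{\bm{x}}_{\rm ens}[k+1]=A[k]\hat{\bm{x}}_{\rm ens}[k]$. The rest is routine subtraction.
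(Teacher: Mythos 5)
Your proof is correct, and it takes a genuinely leaner route than the paper's. The paper first builds the full per-clock error dynamics: it substitutes the measurement equation $\bm{y}[k+1]=\bm H\bm{x}[k+1]+\bm{w}[k+1]$ into the update, introduces the projection matrices $P=\mathds{1}_m\beta^{\tr}$, $\overline{P}=I_m-\mathds{1}_m\beta^{\tr}$ and the auxiliary matrices $\overline{V}^{\ddag}$, $F^{\ddag}$, $\overline{F}$, obtains $\bm{\epsilon}[k+1]=F^{\ddag}(A[k]\otimes I_m)\bm{\epsilon}[k]-\overline{F}\bm{w}[k+1]+F^{\ddag}\bm{v}[k]$, and only then projects by $(I_n\otimes\beta^{\tr})$, using the annihilation identities $\beta^{\tr}\overline{V}^{\ddag}=0$ and $\beta^{\tr}P=\beta^{\tr}$ to kill the noise term. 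You instead project the \emph{estimator} recursion \eqref{eq:Hanado1_general}--\eqref{eq:Hanado2_general} directly, so the two $\bm{y}[k+1]$ terms cancel using nothing but $\beta^{\tr}\mathds{1}_m=1$ and the Kronecker mixed-product rule; you never need to substitute the measurement equation or introduce any of the $\overline{V}^{\dagger}$/$\overline{V}^{\ddag}$ machinery. This buys two things: a more elementary argument, and a slightly stronger intermediate fact, namely that the weighted estimate itself evolves autonomously, $\hat{\bm{x}}_{\rm ens}[k+1]=A[k]\hat{\bm{x}}_{\rm ens}[k]$, i.e.\ it is entirely measurement-independent, not merely the error $\bm{\epsilon}_{\rm ens}$. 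What the paper's heavier computation buys in exchange is reuse: the full-state error dynamics \eqref{eq:epsilon_JST}, with its explicit $\overline{F}\bm{w}$ term, is exactly what is needed later in the proof of Theorem~\ref{prop:thm3} to analyze the per-clock residuals and the observable-subspace error of the JST-algo, which your ensemble-level argument alone would not provide. One point to state explicitly if you polish this up: the recursion you derive closes by induction because $\hat{\bm{x}}_{\rm ens}[0]=(I_n\otimes\beta^{\tr})\hat{\bm{x}}[0]$ is a deterministic initial guess, so no $\bm{w}$-dependence can ever enter the weighted variables even though each individual component $\Delta\hat h^i[k]$ does depend on $\bm{y}$.
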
 
 
\begin{proof} 
For the following analysis, we let 
\[
P:= \mathds{1}_m \beta^{\tr}
,\quad
\overline{P}:= I_m-\mathds{1}_m \beta^{\tr}
\]
which are projection matrices satisfying 
\[
\overline{P}=\overline{V}^{\dagger}\overline{V}-\mathds{1}_m(\beta-\tfrac{1}{m}\mathds{1}_m)^{\tr},\quad 
\overline{P}e_{1:m-1}\overline{V}=\overline{P}.
\]
Now letting $\overline{V}^{\ddag}:=\overline{V}^{\dagger}-\mathds{1}_m(\beta-\tfrac{1}{m}\mathds{1}_m)^{\tr}e_{1:m-1}$, 
the predicted time deviations of the JST-algo are written as
\begin{align} \nonumber
\!\!\!\!\hat{\bm{x}}_1[k+1]  
= & \bigl(C  {A}[k] \otimes P \bigr)\hat{\bm{x}}[k]+\overline{P} e_{1:m-1}\bm{y}[k+1]
\\ \nonumber
= & \bigl(C  {A}[k] \otimes P \bigr)\hat{\bm{x}}[k] 
    +\overline{P}e_{1:m-1} \bigl\{ \Bigl(C  A[k] \!\otimes\! \overline{V}\bigr)  \bm{x}[k] 
\\ \nonumber
   &\qquad+  \bigl(C   \otimes \overline{V}\bigr)\bm{v}[k]+\bm{w}[k+1]   \bigr\}
\\  \nonumber
= & \bigl(C  {A}[k] \otimes P \bigr)\hat{\bm{x}}[k] + \bigl(CA[k] \!\otimes\! \overline{P} \bigr)\bm{x}[k]
\\
   &+ \overline{P} \bm{v}_1[k] +\overline{V}^{\ddag}\bm{w}[k+1],
\end{align} 
Thus the prediction error $\bm{\epsilon}[\cdot]:=\bm{x}[\cdot] - \hat{\bm{x}}[\cdot]$ of JST-algo follows  
\begin{align}  \nonumber
\bm{\epsilon}_1 [k+1] 
&=\bigl(C {A}[k] \otimes P\bigr)\bm{\epsilon} [k]
    + P \bm{v}_1 [k] -\overline{V}^{\ddag}\bm{w}[k+1] 
\\
\bm{\epsilon}_{2:n} [k+1] 
&= \bigl( A_{2:n}[k] \otimes I_m \bigr) \bm{\epsilon}_{2:n}[k] 
    +\bm{v}_{2:n}[k].
\end{align} 
where $\bm{\epsilon}_1$ is understood as the clock residuals $\hat{\bm{x}}_1-\hat{\bm{x}}_1$ of time deviations.
Equivalently,  
we have
\begin{align}  \nonumber
\bm{\epsilon}[k+1] 
&=\underbrace{\mat{ &\!\!\!\!\!\!\!\!\!\!\!\!C {A}[k] \otimes P\\ 0 & A_{2:n}[k] \otimes I_m}}_{F^{\ddag}\bigl(A[k] \otimes I_m\bigr)}\bm{\epsilon} [k]
      -\underbrace{\mat{ \overline{V}^{\ddag} \\ 0}}_{\overline{F}}\bm{w}[k+1] \\  \label{eq:epsilon_JST}
  &  + \underbrace{\mat{&\!\!\!\!\!\!\!\!\!\!\!\!C  \otimes P\\ 0 &  I_{n-1}\otimes  I_m}}_{F^{\ddag}}  \bm{v}[k] .
\end{align} 
Here, note that $\overline{F}$, $F^{\ddag}$
satisfy
$\bigl( I_n \otimes \beta^{\tr} \bigr)\overline{F} =0$,
$\bigl( I_n \otimes \beta^{\tr} \bigr)F^{\ddag}= I_n \otimes \beta^{\tr} $ 
due to $\beta^{\tr}\overline{V}^{\ddag}=0$, $\beta^{\tr} P=\beta^{\tr}$  
and hence we obtain 
\begin{align}  \nonumber
\bm{\epsilon}_{\rm ens}[k+1] 
&=  
\bigl( I_n \otimes \beta^{\tr} \bigr)\bigl\{ 
                        F^{\ddag}\bigl(A[k] \otimes I_m\bigr)\bm{\epsilon} [k] \\ \nonumber
     & \qquad  +F^{\ddag}\bm{v}[k]
                      -\overline{F} \bm{w}[k+1]
 \bigr\}\\ \nonumber
 &=  \bigl( I_n \otimes \beta^{\tr} \bigr)\Bigl(A[k] \otimes I_m\Bigr)\bm{\epsilon} [k]
   +\bigl( I_n \otimes \beta^{\tr} \bigr) \bm{v}[k]\\  \label{eq:weigt1}
&= A[k] \bigl( I_n \otimes \beta^{\tr} \bigr)\bm{\epsilon} [k]+\bigl( I_n \otimes \beta^{\tr} \bigr) \bm{v}[k] ,
\end{align}
which completes proof.
\end{proof} 

\begin{remark}   \label{rem:1} 
The result of Theorem~\ref{prop:hanado}  can contribute to theoretically explaining why NMIs all over the world often require more atomic clocks to generate more accurate time scales. 
Specifically, consider a homogeneous ensemble with a constant sampling interval $\tau_k=\tau$ for $k=0,1,\ldots,T$, 
where the covariance of state noise $\bm{v}[k]$ is written as $Q\otimes I_m$ for some $Q$.
In this case, supposing that the initial error $\bm{\epsilon}[0]= \hat\mu_0 \otimes \mathds{1}_m$ for some $\hat\mu_0\in\mathbb R^n$ and the initial covariance of prediction error is $\bm P_0$, 
it can be shown from \eqref{eq:weq} that the expected value of the averaged atomic time
$
 \mathbb E [{\rm TA} [k ]] =CA^k(\tau)\hat\mu_0
$
does not depend on the number $m$ of the atomic clocks,
but the covariance
 \begin{align}    \nonumber
 \mathbb C({\rm TA} [k ])= &C\Big\{A^k(\tau) \bigl( I_n \otimes \beta^{\tr} \bigr) \bm P_0\bigl( I_n \otimes \beta\bigr)  A^k(\tau)^{\tr}\\  
 &\qquad+\sum\nolimits_{i=0}^{k-1}A^i(\tau)\beta^{\tr}\beta QA^i(\tau)^{\tr}\Big\}C^{\tr}
\end{align}
is diminished by increasing the number $m$ of the atomic clocks when the weights of the clocks are set to all the same (since $\beta^{\tr}\beta=\tfrac{1}{m}$ under $\beta=\tfrac{1}{m}\mathds{1}_m$).
In other words, better prediction performance of the averaged atomic time can be achieved for the atomic clock ensemble with larger number $m$ of the clocks.
\end{remark} 

Now, we begin to make a theoretical analysis for CKF-algo.
It is well known that the CKF-algo may result in numerical instability in the real implementation of time generations.
This is because the computation error accumulates in the Kalman gains due to the divergence of error covariance under undetectability of the state space model. 
Two covariance reduction methods are developed in \cite{brown1991theory} and \cite{greenhall2006kalman}
to suppress the numerical instability in the real implementation.
However, neither of them can absolutely avoid the appearance of numerical instability especially for the case when the initial covariance $\bm P_0$ is large, and the theoretical analysis of CKF-algo for the ideal case without computation errors is still unclear.


To reveal the theoretical expression of averaged atomic time ${\rm TA}[k]$ of the CKF-algo, 
we note that the state profile $\bm x$ of the ensemble can be decomposed by observable Kalman canonical decomposition as
\begin{equation} \label{eq:tranformation}
\bm x=\mat{I_n\otimes \overline{V}^{\dagger}&I_n\otimes\mathds{1}_m}\mat{ \bm \xi_{\rm o}   \\  \bm{\xi}_{\rm \bar o}  } 
\end{equation}
where $\bm \xi_{\rm o}:=\bigl(I_n \otimes\overline{V})\bm x\in\mathbb R^{nm-n}$, 
$\bm{\xi}_{\rm \bar o}:=\tfrac{1}{m}\bigl(I_n \otimes\mathds{1}_m^{\tr})\bm x\in\mathbb R^{n}$ 
denote the observable and unobservable state, respectively. 
Using this fact and letting \begin{align}
&\bm{\epsilon}_{\rm o}[k]
:={\bm{\xi}}_{\rm o}[k]-\hat{\bm{\xi}}_{\rm o}[k]
=\bigl(I_n \otimes\overline{V})\bm\epsilon[k], \\
&\bm{\epsilon}_{\rm \bar o}[k]
:={\bm{\xi}}_{\rm \bar o}[k]-\hat{\bm{\xi}}_{\rm \bar o}[k]
=\tfrac{1}{m}\bigl(I_n \otimes\mathds{1}_m^{\tr})\bm\epsilon[k],
\end{align}
the ensemble prediction error
$\bm{\epsilon}_{\rm ens}$  is transformed as
\begin{align} \nonumber
\bm{\epsilon}_{\rm ens}[k]
&=(I_n \otimes \beta^{\tr})\mat{I_n\otimes \overline{V}^{\dagger}&I_n\otimes\mathds{1}_m}\mat{ \bm \epsilon_{\rm o} [k]  \\  \bm{\epsilon}_{\rm \bar o}[k]  } \\ \label{eq:decom_ens}
&=\bigl(I_n\otimes \beta^{\tr}\overline{V}^{\dagger}\bigr)\bm{\epsilon}_{\rm o}[k]
+\bm{\epsilon}_{\rm \bar o} [k].
\end{align}
That is to say, once we derive the dynamics of $\bm{\epsilon}_{\rm o}[k]$ and $\bm{\epsilon}_{\rm \bar o}[k]$, 
the dynamics of $\bm{\epsilon}_{\rm ens}[k]$ can be accordingly derived. 

In terms of CKF-algo,
it can be theoretically shown that the predicted observable state $ \hat{\bm \xi}_{\rm o}$ under CKF-algo follows
\begin{align} \label{eq:Kalmangain}
&\hat{\bm K}_k = \hat{\bm P}_k {\bm{H}_{\rm o} ^{\sf T}\bigr(\bm{H}_{\rm o}\hat{\bm P}_k\bm{ H}_{\rm o}^{\sf T}+\hat R\bigr)^{-1}}  \\   \label{eq:Lk_PKF}
&\hat{\bm P}_{k+1}=    \bm F_{\rm o}[k] (\hat{\bm P}_k -\hat{\bm K}_k \bm{H}_{\rm o}\hat{\bm P}_k) \bm F_{\rm o}[k] ^{\sf T} +W_{\rm o}  
\\ \label{eq:PKA2_o}
&\hat{\bm{\xi}}_{\rm o}[k+1] =\bm F_{\rm o}[k]  \hat{\bm{\xi}}_{\rm o}[k] +\hat{\bm K}_k(\bm{y}[k]-\bm{H}_{\rm o} \hat{\bm{\xi}}_{\rm o}[k]) 
\end{align}
where $ \bm F_{\rm o}[k] := A[k]\otimes I_{m-1}$, $\bm H_{\rm o}:=C \otimes I_{m-1}$, $W_{\rm o}:=  ( I_n \otimes \overline{V})\hat W( I_n \otimes \overline{V})^{\tr}$, are the system matrix, measurement matrix, and system noise covariance for the observable subspace.
Note that $ \hat{\bm K}_k:=( I_n \otimes \overline{V})\bm K_k$ and $ \hat{\bm P}_k:=  ( I_n \otimes \overline{V}) \bm P_k( I_n \otimes \overline{V})^{\tr}$
are understood as the Kalman gain and the error covariance of CKF-algo in observable state space, respectively. 
The detailed derivation of \eqref{eq:PKA2_o} is attached in Appendix below for reference. 

The next result shows that the averaged atomic time ${\rm TA}[k]$  of JST-algo and CKF-algo are equivalent to each other if we adopt some specific guesses of the noise covariance and choose equal weights for the clocks.

\begin{theorem} \label{prop:partial_kalman_obse}
Consider the system model \eqref{eq:Ndmodel}  for an $m$-clock ensemble.
For a given initial guess $\hat{\bm x}[0]$,
if the guess $\hat W$ of the system noise covariance is given by $\hat W=Q\otimes I_m$
for some $Q\geq0$, 
then the averaged atomic time ${\rm TA}[k]$ of the CKF-algo is given by \eqref{eq:weq}
with the ensemble prediction error
\begin{align}\nonumber
\bm{\epsilon}_{\rm ens}[k+1]   
=& A[k] \bm{\epsilon}_{\rm ens}[k]
  +( I_n \otimes \beta^{\tr} ) \bm{v}[k]\\ \label{eq:CKF_ens}
  &-( I_n \otimes \beta^{\tr} \overline{V}^{\dagger} )\hat{\bm K}_k (\bm{H}_{\rm o}\bm{\epsilon}^{\rm CKF}_{\rm o} [k]+ \bm w[k])
\end{align}
where the prediction error of observable state $\bm \xi_{\rm o}$ follows
\begin{align} \nonumber
\bm{\epsilon}^{\rm CKF}_{\rm o}[k+1]=&(\bm F_{\rm o}[k] -\hat{\bm K}_k\bm{H}_{\rm o})\bm{\epsilon}^{\rm CKF}_{\rm o}[k]-\hat{\bm K}_k\bm w[k]
\\  \label{eq:CKF_ob}
&+( I_n \otimes \overline{V})\bm v[k] .
\end{align}
Furthermore, with such a $\hat W$,
the generalized JST-algo and CKF-algo generate 
the same averaged atomic time ${\rm TA}[k]$ for $k=0,1,\ldots, T$
if and only if the weights of the atomic clocks are all equal, i.e.,
$\beta= \tfrac{1}{m} \mathds{1}_m$.
\end{theorem}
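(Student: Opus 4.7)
My plan is to first derive the CKF ensemble error dynamics \eqref{eq:CKF_ens}--\eqref{eq:CKF_ob} via the observable Kalman canonical decomposition \eqref{eq:tranformation}, and then compare the result against the JST formula \eqref{eq:hatepsilon} of Theorem~\ref{prop:hanado} to read off the equivalence condition. The key structural observation is that the unobservable coordinate $\bm{\xi}_{\rm \bar o}=\tfrac{1}{m}(I_n\otimes\mathds{1}_m^{\tr})\bm x$ does not enter the innovation $\bm y-\bm H\hat{\bm x}$, so the CKF correction acts only on $\hat{\bm{\xi}}_{\rm o}$ while $\hat{\bm{\xi}}_{\rm \bar o}$ evolves in open loop. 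The assumption $\hat W=Q\otimes I_m$ guarantees $W_{\rm o}=Q\otimes\overline V\overline V^{\tr}$ inherits the Kronecker structure needed to keep the observable Riccati \eqref{eq:Lk_PKF} decoupled from the unobservable channel. Differencing the corrected observable dynamics \eqref{eq:PKA2_o} against the true observable evolution produces \eqref{eq:CKF_ob}, while the open-loop unobservable part gives $\bm{\epsilon}_{\rm \bar o}[k+1]=A[k]\bm{\epsilon}_{\rm \bar o}[k]+\tfrac{1}{m}(I_n\otimes\mathds{1}_m^{\tr})\bm v[k]$; assembling both through \eqref{eq:decom_ens} with a short Kronecker rearrangement yields \eqref{eq:CKF_ens}.

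With the two ensemble error formulas in hand, ${\rm TA}^{\rm CKF}[k]={\rm TA}^{\rm JST}[k]$ along every sample path reduces to the correction term $C(I_n\otimes\beta^{\tr}\overline V^{\dagger})\hat{\bm K}_k(\bm H_{\rm o}\bm{\epsilon}^{\rm CKF}_{\rm o}[k]+\bm w[k])$ vanishing for every $k$. For sufficiency, I would compute $\overline V^{\dagger}=\overline V^{\tr}(\overline V\overline V^{\tr})^{-1}$ with $\overline V\overline V^{\tr}=I_{m-1}+\mathds{1}_{m-1}\mathds{1}_{m-1}^{\tr}$, and observe that $\beta=\tfrac{1}{m}\mathds{1}_m$ gives $\overline V\beta=0$, hence $\beta^{\tr}\overline V^{\dagger}=0$, so the correction term vanishes and \eqref{eq:CKF_ens} collapses to \eqref{eq:hatepsilon}. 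For necessity, $\beta^{\tr}\overline V^{\dagger}=0$ forces $\overline V\beta=0$ by the invertibility of $\overline V\overline V^{\tr}$, i.e., $\beta_1=\cdots=\beta_m$, which combined with $\mathds{1}_m^{\tr}\beta=1$ pins down $\beta=\tfrac{1}{m}\mathds{1}_m$.

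The main obstacle is the pathwise necessity direction: I still need to show that any $\beta\neq\tfrac{1}{m}\mathds{1}_m$ actually produces a nonzero discrepancy in ${\rm TA}[k]$, not merely a nonzero coefficient in the recursion. I plan to settle this at $k=1$ by a direct Kronecker computation. Starting from $\hat{\bm P}_0=p(I_n\otimes\overline V\overline V^{\tr})$, the identity $\overline V^{\dagger}\overline V\overline V^{\tr}=\overline V^{\tr}$ yields $(C\otimes\beta^{\tr}\overline V^{\dagger})\hat{\bm K}_0=p\,\beta^{\tr}\overline V^{\tr}(p\overline V\overline V^{\tr}+\hat R)^{-1}$, which is nonzero exactly when $\overline V\beta\neq 0$. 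Since $\bm w[0]$ has a non-degenerate Gaussian distribution, the induced random shift in $C\bm{\epsilon}_{\rm ens}^{\rm CKF}[1]$ is genuinely nonzero with positive probability, breaking pathwise equivalence with ${\rm TA}^{\rm JST}[1]$ and completing the necessity argument.
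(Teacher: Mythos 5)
There is a genuine gap, and it sits exactly where the hypothesis $\hat W=Q\otimes I_m$ has to do its work. You justify the open-loop evolution of the unobservable error---and hence \eqref{eq:CKF_ens} itself---by the claim that since $\bm{\xi}_{\rm \bar o}$ does not enter the innovation, ``the CKF correction acts only on $\hat{\bm{\xi}}_{\rm o}$ while $\hat{\bm{\xi}}_{\rm \bar o}$ evolves in open loop.'' That implication is false in general: the unobservable component of the CKF update is $\tfrac{1}{m}\bigl(I_n\otimes\mathds{1}_m^{\tr}\bigr)\bm K_k\bigl(\bm{y}[k]-\bm H\hat{\bm x}[k]\bigr)$, and $\bigl(I_n\otimes\mathds{1}_m^{\tr}\bigr)\bm K_k$ is proportional to the cross-covariance $\bigl(I_n\otimes\mathds{1}_m^{\tr}\bigr)\bm P_k\bm H^{\tr}$ between the unobservable coordinate and the measured output. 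For a generic $\hat W$ this cross term is nonzero and the filter \emph{does} correct the unobservable estimate---which is precisely why the theorem needs the Kronecker assumption on $\hat W$ at all; if your structural principle were true, the theorem would hold without it. Your observation that $W_{\rm o}=Q\otimes\overline{V}\,\overline{V}^{\tr}$ concerns only the observable block of the noise and never touches this cross term. What is missing is the induction the paper carries out (its $\bm U_k$ computation): from $\bm P_0=pI_{nm}$ and $\overline{V}\mathds{1}_m=0$ one gets $\bigl(I_n\otimes\mathds{1}_m^{\tr}\bigr)\bm P_0\bm H^{\tr}=0$; the assumption $\hat W=Q\otimes I_m$ gives $\bigl(I_n\otimes\mathds{1}_m^{\tr}\bigr)\hat W\bm H^{\tr}=QC^{\tr}\otimes\mathds{1}_m^{\tr}\overline{V}^{\tr}=0$; and these two facts propagate through the Riccati recursion \eqref{eq:Pk} to yield $\bigl(I_n\otimes\mathds{1}_m^{\tr}\bigr)\bm P_k\bm H^{\tr}=0$, hence $\bigl(I_n\otimes\mathds{1}_m^{\tr}\bigr)\bm K_k=0$, for every $k$. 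Only after this step are \eqref{eq:CKF_obar} and therefore \eqref{eq:CKF_ens} available for your assembly via \eqref{eq:decom_ens}.

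The rest of your argument is sound and, in one respect, sharper than the paper's. The derivation of \eqref{eq:CKF_ob} by differencing \eqref{eq:PKA2_o} against the true observable dynamics is fine, sufficiency follows cleanly from $\beta^{\tr}\overline{V}^{\dagger}=(\overline{V}\beta)^{\tr}\bigl(\overline{V}\,\overline{V}^{\tr}\bigr)^{-1}=0$ under $\beta=\tfrac{1}{m}\mathds{1}_m$, and your necessity argument---the explicit computation $\bigl(C\otimes\beta^{\tr}\overline{V}^{\dagger}\bigr)\hat{\bm K}_0=p\,\beta^{\tr}\overline{V}^{\tr}\bigl(p\overline{V}\,\overline{V}^{\tr}+\hat R\bigr)^{-1}\neq0$ for $\overline{V}\beta\neq0$, combined with nondegeneracy of $\bm w[0]$ to conclude a pathwise discrepancy at $k=1$---closes a point the paper glosses over, since the paper simply identifies equivalence of ${\rm TA}[k]$ with the vanishing of $\bigl(I_n\otimes\beta_{\rm df}^{\tr}\bigr)\bm K_k$ without arguing that a nonzero coefficient produces an actual difference. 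Repair the cross-covariance induction above and your proof is complete.
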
 

\begin{proof} 
First, using  
$
\bm{y}[k] = \bm H \bm{x}[k] + \bm{w}[k]=\bm H_{\rm o} \bm{\xi}_{\rm o}[k] + \bm{w}[k]
$,
it follows that the derivation \eqref{eq:CKF_ob} of the prediction error in the observable state is immediate from \eqref{eq:PKA2_o} since $\bm \xi_{\rm o} $ follows 
\[
\bm \xi_{\rm o}[k+1]=\bm F_{\rm o}[k] \hat{\bm \xi}_{\rm o}[k] + ( I_n \otimes \overline{V})\bm v[k] .
\]
In terms of the prediction error in the unobservable state, 
since the prediction error of Kalman filter is   
$
\bm{\epsilon} [k+1]=(\bm F[k]-\bm K_k\bm{H})\bm{\epsilon}[k]-\bm K_k\bm w[k]+\bm v[k] ,
$
we have  
\begin{align}\nonumber
\bm{\epsilon}^{\rm CKF}_{\rm \bar o}[k+1] 
=&A[k]\bm{\epsilon}^{\rm CKF}_{\rm \bar o}[k]+\tfrac{1}{m}\bigl(I_n \otimes\mathds{1}_m^{\tr})\bm{v}[k]   
\\  \label{eq:weighted_error}
&-\tfrac{1}{m}\bigl(I_n \otimes\mathds{1}_m^{\tr})\bm K_k(\bm{H} \bm{\epsilon}[k]+ \bm w[k])
\end{align} 
Note that the condition $\hat W=Q\otimes I_m$ indicates 
$(I_n \otimes\mathds{1}_m^{\tr})\bm K_k=(I_n \otimes\mathds{1}_m^{\tr})\bm P_k {{\bm H}^{\sf T}\bigr(\bm{H}\bm P_k{ \bm H}^{\sf T}+\hat R\bigr)^{-1}}=0$ for $k=0,1,\ldots, T$, because $(I_n \otimes\mathds{1}_m^{\tr})\bm P_k {\bm H} ^{\sf T}=0$, $k=0,1,\ldots, T$.
In particular, since $\bm P_0=pI_{nm}$, it follows that
$(I_n \otimes\mathds{1}_m^{\tr})\bm P_0 {\bm H} ^{\sf T}=0$ holds.
Furthermore, let $(I_n \otimes\mathds{1}_m^{\tr})\bm P_k {\bm H} ^{\sf T}:=\bm U_k$, we have
\begin{align}\nonumber
\bm U_1=&(I_n\otimes \mathds{1}_m^{\tr})
\left(\bm F[0] (\bm P_0-\bm K_0 {\bm H}\bm P_0) \bm F[0]^{\sf T} +\hat W\right) {\bm H}^{\sf T}
\\ \nonumber
=&pA[0]A[0]^{\sf T}(I_n\otimes \mathds{1}_m^{\tr}){\bm H}^{\sf T}
+(I_n\otimes \mathds{1}_m^{\tr})\hat W{\bm H}^{\sf T}=0,
\\ \nonumber
\bm U_2=&(I_n\otimes \mathds{1}_m^{\tr})
\left(\bm F[1] (\bm P_1-\bm K_1 {\bm H}\bm P_1) \bm F[1]^{\sf T} +\hat W\right) 
{\bm H}^{\sf T}\\ \nonumber
=&A[1](I_n\otimes \mathds{1}_m^{\tr})\left(\bm F[0] (\bm P_0-\bm K_0 {\bm H}\bm P_0) \bm F[0]^{\sf T} +\hat W\right) 
\\  \nonumber
&\cdot\bm F[1]^{\sf T} {\bm H}^{\sf T}+(I_n\otimes \mathds{1}_m^{\tr})\hat W{\bm H}^{\sf T}
\\ \nonumber
=&pA[1]A[0]A[0]^{\sf T}A[1]^{\sf T}(I_n\otimes \mathds{1}_m^{\tr}){\bm H}^{\sf T}
=0,
\end{align}
whereas the proof for $\bm{U}_k=0$, $k>2$ can be similarly handled.

Thus, the prediction error of unobservable state is given by
\begin{align} \label{eq:CKF_obar}
\bm{\epsilon}^{\rm CKF}_{\rm \bar o}[k+1] 
=&A[k]\bm{\epsilon}^{\rm CKF}_{\rm \bar o}[k]+\tfrac{1}{m}\bigl(I_n \otimes\mathds{1}_m^{\tr})\bm{v}[k]   
\end{align} 
Now, noting that 
$\beta^{\tr}\overline{V}^{\dagger}\overline{V}=\beta^{\tr}-\tfrac{1}{m}\mathds{1}_m^{\tr}=\beta^{\tr}_{\rm df}$,
it follows from
\begin{align} \nonumber
\bm{\epsilon}_{\rm ens}[k\!+\!1]
=&\bigl(I_n\otimes \beta^{\tr}\overline{V}^{\dagger}\bigr)\bm{\epsilon}^{\rm CKF}_{\rm o}[k+1]
+\bm{\epsilon}^{\rm CKF}_{\rm \bar o} [k+1]
\\ \nonumber
=
&\bigl(I_n\otimes \beta^{\tr}\overline{V}^{\dagger}\bigr)
\Big(\bm F_{\rm o}[k]\bm{\epsilon}^{\rm CKF}_{\rm o}[k] +( I_n \otimes \overline{V})\bm v[k] \Big)
\\ \nonumber
&+A[k]\bm{\epsilon}^{\rm CKF}_{\rm \bar o}[k]+\tfrac{1}{m}\bigl(I_n \otimes\mathds{1}_m^{\tr})\bm{v}[k]  
\\ \nonumber
&-\bigl(I_n\otimes \beta^{\tr}\overline{V}^{\dagger}\bigr)\hat{\bm K}_k
\Big(\bm{H}_{\rm o}\bm{\epsilon}^{\rm CKF}_{\rm o}[k]+\bm w[k])\Big)
\\ \nonumber
=
&A[k]\bigl(I_n\otimes \beta^{\tr}\bigr)\bm{\epsilon}[k]+\bigl(I_n\otimes \beta^{\tr}\bigr)\bm{v}[k]
\\ 
&-\bigl(I_n\!\otimes\! \beta^{\tr}\overline{V}^{\dagger}\bigr)\hat{\bm K}_k
\Big(\bm{H}_{\rm o}\bm{\epsilon}^{\rm CKF}_{\rm o}[k]\!+\!\bm w[k])\Big)
\end{align}
that \eqref{eq:CKF_ens} holds. 
Then it can be seen
that CKF-algo and JST-algo generate the same averaged atomic time ${\rm TA}[k]$ if and only if 
$
\bigl(I_n\otimes \beta^{\tr}\overline{V}^{\dagger}\bigr)\hat{\bm K}_k=\bigl(I_n\otimes \beta^{\tr}\overline{V}^{\dagger}\overline{V}\bigr){\bm K}_k=(I_n\otimes \beta_{\rm df}^{\tr})\bm K_k=0
$, i.e., 
$(I_n \otimes\beta_{\rm df}^{\tr})\bm P_k {\bm H} ^{\sf T}=0$, $k=0,1,\ldots, T$.

Recalling $\bm P_0=pI_{nm}$ and $\beta_{\rm df}$ satisfies $\beta_{\rm df}^{\tr}\mathds{1}_m=0$, it follows that
$(I_n \otimes\beta_{\rm df}^{\tr})\bm P_0 {\bm H} ^{\sf T}=0$ and 
$(I_n \otimes\beta_{\rm df}^{\tr})\bm P_1 {\bm H} ^{\sf T}=0$ holds if and only if $\beta_{\rm df}=0$,
i.e., $\beta=\tfrac{1}{m} \mathds{1}_m$, 
which completes the proof. \end{proof}

\begin{remark}
The equivalence result in Theorem~\ref{prop:partial_kalman_obse} indicates that more precise averaged atomic time can be achieved by the clock ensemble with larger number $m$ of the clocks for both JST-algo and CKF-algo in a homogeneous ensemble (see Remark~\ref{rem:1}).  
However, it is worth noting that the computation cost of the CKF-algo may be larger than the generalized JST-algo when the number $m$ of the clocks is too big because CKF-algo requires more matrix computation than the generalized JST-algo. 
A discussion in terms of the runtime of CKF-algo and the generalize JST-algo will be given in Section~\ref{sec:example} later. 
\end{remark}

\subsection{Relation Between The Generalized JST-algo and CKF-algo in Clock Residual}  

Except the comparison of accuracy with the averaged atomic time ${\rm TA}[k]$, it is important to compare  
clock residuals $\epsilon_i[k]:=\Delta h^i[k]-\Delta \hat h^i[k]$ of the CKF-algo and the generalized JST-algo because   
the clock residuals are related to stability of the generated time.
The next result reveals that the mean of the clock residual $\bm{\epsilon}_1=(\epsilon_1,\ldots,\epsilon_m)^\tr$ of  the generalized JST-algo and CKF-algo may be eventually equivalent in the case if  equal weights are considered for the clocks.

\begin{theorem}  \label{prop:thm3}
Consider the system model \eqref{eq:Ndmodel}  for an $m$-clock ensemble.
For a given initial guess $\hat{\bm x}[0]$,
if the weights of the atomic clocks are all equal, i.e.,
$
\beta= \tfrac{1}{m} \mathds{1}_m
$,
and if the guess $\hat W$ of the state noise covariance is given by $\hat W=Q\otimes I_m$,
for some $Q\geq0$, 
then the clock residuals $\bm{\epsilon}_1$ of the generalized JST-algo and CKF-algo respectively follow
\begin{align}  \label{eq:weqs}
    \bm{\epsilon}^{\rm JST}_1[k]&= 
    -\overline{V}^{\dagger}\bm{w}[k]
    +C \bm{\epsilon}^{\rm CKF}_{\rm \bar o} [k]\mathds{1}_m \\   \label{eq:erro_ind_CKF}
        \bm{\epsilon}^{\rm CKF}_1[k]&=  \overline{V}^{\dagger} \bm H_{\rm o}\bm{\epsilon}^{\rm CKF}_{\rm o}[k]
    +C \bm{\epsilon}^{\rm CKF}_{\rm \bar o} [k]\mathds{1}_m
\end{align}
for $k=1,2,\ldots, T$, where $\bm{\epsilon}^{\rm CKF}_{\rm o} [k]$ and $\bm{\epsilon}^{\rm CKF}_{\rm \bar o} [k]$ are  given by \eqref{eq:CKF_ob} and \eqref{eq:CKF_obar}, respectively.

Furthermore, 
if the sampling interval is constant, i.e., $\tau_k=\tau$ for $k=0,\ldots,T$,  
then the mean of  clock residual $\bm{\epsilon}_1$ satisfy
\begin{equation}  \label{eq:weqsas}
\lim_{k\to \infty}
\left\{
\mathbb E\Big[\bm{\epsilon}_1^{\rm JST}[k]
-\bm{\epsilon}_1^{\rm CKF}[k]\Big]
\right\}=0.
\end{equation}
If, in addition, there is no observation noise, i.e., $\bm{w}[k]=0$, $k=0,\ldots, T$, 
then the covariance of  clock residual $\bm{\epsilon}_1$ satisfy 
\begin{equation}  
\mathbb C\Big[\bm{\epsilon}_1^{\rm JST}[k]\Big]-\mathbb C\Big[\bm{\epsilon}_1^{\rm CKF}[k]\Big] \leq 0,\quad k=0,1,\ldots,T.
\end{equation} 

\begin{proof} 
First, similar to the decomposition of $\bm{\epsilon}_{\rm ens}$ in \eqref{eq:decom_ens}, 
the clock residual $\bm{\epsilon}_1$ can be decomposed into
\begin{align} \nonumber
\bm{\epsilon}_{1}[k]
&=(C \otimes I_m)\mat{I_n\otimes \overline{V}^{\dagger}&I_n\otimes\mathds{1}_m}\mat{ \bm \epsilon_{\rm o} [k]  \\  \bm{\epsilon}_{\rm \bar o}[k]  } \\ \nonumber
&=  \bigl(C\otimes \overline{V}^{\dagger}\bigr)\bm{\epsilon}_{\rm o}[k]
    +\bigl(C\otimes\mathds{1}_m\bigr)\bm{\epsilon}_{\rm \bar o} [k]\\ 
&=\overline{V}^{\dagger}\bigl(\underbrace{C\otimes  I_{m-1}}_{\bm H_{\rm o}}\bigr)\bm{\epsilon}_{\rm o}[k]+C\bm{\epsilon}_{\rm \bar o} [k]\mathds{1}_m.
\end{align}
which directly proves \eqref{eq:erro_ind_CKF} for CKF-algo.
Now it follows from the error dynamics \eqref{eq:epsilon_JST} of  the generalized JST-algo that
the prediction error $\bm{\epsilon}^{\rm JST}_{\rm o}=\bigl(I_n \otimes\overline{V})\bm\epsilon[k]$ of the generalized JST-algo follows 
\begin{align}   \nonumber
&\bm{\epsilon}^{\rm JST}_{\rm o}[k+1]=
\bigl(I_n \otimes\overline{V})\bigl\{ F^{\ddag}\bigl(A[k] \otimes I_m\bigr)\bm{\epsilon} [k]\\ \nonumber
&
\qquad\qquad\qquad+F^{\ddag}\bm{v}[k]-\overline{F} \bm{w}[k+1]\bigr\}\\ \nonumber
&=\bigl(A_{2:n}^0[k]\otimes \overline{V}\bigr)\bm{\epsilon}[k]
+\bigl(I^0_{n-1}\otimes \overline{V}\bigr)\bm{v}[k]\!-\!\mat{\!I_{m-1}\! \\ 0}\bm{w}[k+1]\\ \nonumber
&=\bigl(A_{2:n}^0[k]\otimes I_{m-1}\bigr)\bm{\epsilon}^{\rm JST}_{\rm o}[k]+\bigl(I^0_{n-1}\otimes \overline{V}\bigr)\bm{v}[k]\\    \label{eq:observable_JST}
&\quad-\mat{I_{m-1} \\ 0}\bm{w}[k+1]
\end{align}
which is diverging since the eigenvalue $\lambda$ of $A_{2:n}^0$ is given by $\lambda\in\{0,1,\ldots,1\}$.
However, note that  \eqref{eq:observable_JST} indicates 
\begin{equation}
\bm{H}_{\rm o}\bm{\epsilon}^{\rm JST}_{\rm o}[k+1]=-\bm{w}[k+1].
\end{equation}
Then, it follows from Theorem~\ref{prop:partial_kalman_obse} that 
\eqref{eq:weqs} is immediate since $\bm{\epsilon}^{\rm JST}_{\rm \bar o} [k]=\bm{\epsilon}^{\rm CKF}_{\rm \bar o} [k]$ under $\beta= \tfrac{1}{m} \mathds{1}_m$.
Next, under the condition $\tau_k=\tau$, note that the mean of $\bm{\epsilon}^{\rm CKF}_{\rm o}$ in \eqref{eq:CKF_ob} is converging to zero
because of the observable pair $(\bm F_{\rm o}[k],\bm{H}_{\rm o})$. 
Thus, \eqref{eq:weqsas} is immediate since
\begin{align}  \nonumber
\mathbb E\Big[\bm{\epsilon}_1^{\rm JST}[k]-\bm{\epsilon}_1^{\rm CKF}[k]\Big]
&=-\mathbb E\Big[\overline{V}^{\dagger}\bm{w}[k]
+\overline{V}^{\dagger} \bm H_{\rm o}\bm{\epsilon}^{\rm CKF}_{\rm o}[k]\Big]\\ 
&=-\overline{V}^{\dagger} \bm H_{\rm o}\mathbb E\Big[\bm{\epsilon}^{\rm CKF}_{\rm o}[k]\Big]
\end{align}
is converging to 0.  
Now, since
\begin{align}  \nonumber
\mathbb C\Big[\bm{\epsilon}_1^{\rm JST}[k]\Big]-\mathbb C\Big[\bm{\epsilon}_1^{\rm CKF}[k]\Big] 
&=-\mathbb C\Big[\overline{V}^{\dagger} \bm H_{\rm o}\bm{\epsilon}^{\rm CKF}_{\rm o}[k]\Big]\\ \nonumber
&=-\overline{V}^{\dagger} \bm H_{\rm o}\hat{\bm P}[k]\bm H_{\rm o}^{\tr}\bigl(\overline{V}^{\dagger} \bigr)^{\tr}\!\leq0
\end{align}
holds for $\bm{w}[k]=0$, $k=0,1,\ldots,T$,  the proof is complete.
\end{proof}
\end{theorem}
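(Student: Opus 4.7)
The plan is to carry the observable Kalman canonical decomposition used in \eqref{eq:tranformation} over from the scalar average $\bm{\epsilon}_{\rm ens}$ to the full time-deviation vector $\bm{\epsilon}_1=(C\otimes I_m)\bm{\epsilon}$. Expanding $(C\otimes I_m)$ on the basis $[\,I_n\otimes\overline{V}^{\dagger},\ I_n\otimes\mathds{1}_m\,]$ yields the generic identity
\[
\bm{\epsilon}_1[k]=\overline{V}^{\dagger}\bm H_{\rm o}\bm{\epsilon}_{\rm o}[k]+C\bm{\epsilon}_{\rm \bar o}[k]\mathds{1}_m,
\]
valid for both algorithms. Substituting $\bm{\epsilon}_{\rm o}^{\rm CKF}$ and $\bm{\epsilon}_{\rm \bar o}^{\rm CKF}$ immediately produces \eqref{eq:erro_ind_CKF}. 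The nontrivial work is the JST side: the observable-part error dynamics of the generalized JST-algo are obtained by pre-multiplying \eqref{eq:epsilon_JST} by $(I_n\otimes\overline V)$, and the key algebraic observation is that further multiplication by $\bm H_{\rm o}=C\otimes I_{m-1}$ makes the state-dependent terms cancel, leaving $\bm H_{\rm o}\bm{\epsilon}_{\rm o}^{\rm JST}[k+1]=-\bm w[k+1]$. Intuitively, the JST weighting/updating step enforces measurement consistency exactly, so in the observable projection only the latest observation noise survives.

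Under $\beta=\tfrac{1}{m}\mathds{1}_m$ and $\hat W=Q\otimes I_m$, the identity $(I_n\otimes\mathds{1}_m^{\tr})\bm K_k=0$ established inside the proof of Theorem~\ref{prop:partial_kalman_obse} forces the CKF unobservable-state error to obey exactly the same deterministic recursion as its JST counterpart, hence $\bm{\epsilon}_{\rm \bar o}^{\rm JST}=\bm{\epsilon}_{\rm \bar o}^{\rm CKF}$. Feeding $\bm H_{\rm o}\bm{\epsilon}_{\rm o}^{\rm JST}[k]=-\bm w[k]$ and this equality into the generic identity yields \eqref{eq:weqs}. Subtracting \eqref{eq:weqs} from \eqref{eq:erro_ind_CKF} and taking expectations cancels the shared unobservable term as well as the zero-mean $\bm w[k]$, leaving $\mathbb E[\bm\epsilon_1^{\rm JST}-\bm\epsilon_1^{\rm CKF}]=-\overline{V}^{\dagger}\bm H_{\rm o}\mathbb E[\bm{\epsilon}_{\rm o}^{\rm CKF}[k]]$. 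Since $\tau_k\equiv\tau$ makes $(\bm F_{\rm o},\bm H_{\rm o})$ a time-invariant observable pair, standard Kalman-filter convergence drives $\mathbb E[\bm{\epsilon}_{\rm o}^{\rm CKF}[k]]\to 0$, which gives \eqref{eq:weqsas}.

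The covariance comparison under $\bm w\equiv 0$ is the delicate step. Then the difference of residuals is simply $\bm\epsilon_1^{\rm JST}[k]-\bm\epsilon_1^{\rm CKF}[k]=-\overline{V}^{\dagger}\bm H_{\rm o}\bm{\epsilon}_{\rm o}^{\rm CKF}[k]$, and one would like to identify $\mathbb C[\bm\epsilon_1^{\rm JST}[k]]-\mathbb C[\bm\epsilon_1^{\rm CKF}[k]]$ with $-\mathbb C[\overline{V}^{\dagger}\bm H_{\rm o}\bm{\epsilon}_{\rm o}^{\rm CKF}[k]]$. The main obstacle is that both sides depend on the process noise $\bm v$, so this identification is legitimate only when the cross-covariance between $\bm{\epsilon}_{\rm o}^{\rm CKF}$ and $\bm{\epsilon}_{\rm \bar o}^{\rm CKF}$ vanishes. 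My plan for this step is to exploit the Kronecker structure $\hat W=Q\otimes I_m$ together with $\overline V\mathds{1}_m=0$ to show that the noise forcings $(I_n\otimes\overline V)\bm v[k]$ driving $\bm{\epsilon}_{\rm o}^{\rm CKF}$ and $\tfrac{1}{m}(I_n\otimes\mathds{1}_m^{\tr})\bm v[k]$ driving $\bm{\epsilon}_{\rm \bar o}^{\rm CKF}$ are uncorrelated at every step, and then combine this with the fact that the initial $\bm P_0=pI_{nm}$ also decouples the two subspaces at $k=0$; an induction then propagates zero cross-covariance for all $k$. With this decoupling established, the difference reduces to $-\overline{V}^{\dagger}\bm H_{\rm o}\hat{\bm P}_k\bm H_{\rm o}^{\tr}(\overline{V}^{\dagger})^{\tr}\leq 0$, completing the proof.
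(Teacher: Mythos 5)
Your proposal follows the paper's proof essentially step for step: the same Kalman canonical decomposition of $\bm{\epsilon}_1$ giving \eqref{eq:erro_ind_CKF}, the same key identity $\bm{H}_{\rm o}\bm{\epsilon}^{\rm JST}_{\rm o}[k+1]=-\bm{w}[k+1]$ obtained by projecting \eqref{eq:epsilon_JST} onto the observable subspace, the same appeal to $(I_n\otimes\mathds{1}_m^{\tr})\bm K_k=0$ from Theorem~\ref{prop:partial_kalman_obse} to conclude $\bm{\epsilon}^{\rm JST}_{\rm \bar o}=\bm{\epsilon}^{\rm CKF}_{\rm \bar o}$, and the same mean-convergence argument based on the observable pair $(\bm F_{\rm o},\bm H_{\rm o})$. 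The only divergence is the final covariance step, where you are more careful than the paper: the paper writes $\mathbb C\big[\bm{\epsilon}_1^{\rm JST}[k]\big]-\mathbb C\big[\bm{\epsilon}_1^{\rm CKF}[k]\big]=-\mathbb C\big[\overline{V}^{\dagger}\bm H_{\rm o}\bm{\epsilon}^{\rm CKF}_{\rm o}[k]\big]$ without comment, and this identity does require exactly the vanishing cross-covariance between $\bm{\epsilon}^{\rm CKF}_{\rm o}$ and $\bm{\epsilon}^{\rm CKF}_{\rm \bar o}$ that you propose to establish (via $\overline{V}\mathds{1}_m=0$ under the Kronecker noise structure, plus induction from the initial condition), so your addition is a correct and worthwhile tightening of the paper's own argument rather than a detour.
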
 

\begin{remark}   
Note that the term $C \bm{\epsilon}^{\rm CKF}_{\rm \bar o} [k]$ in both \eqref{eq:weqs} and  \eqref{eq:erro_ind_CKF} under the generalized JST-algo and CKF-algo is nothing but the averaged atomic time ${\rm TA}[k]$ since the prediction error $\bm{\epsilon}^{\rm CKF}_{\rm \bar o}$ of observable state in \eqref{eq:CKF_obar} reduces to the ensemble prediction error
$\bm{\epsilon}_{\rm ens}$ in \eqref{eq:hatepsilon} when the conditions of Theorem~\ref{prop:thm3} are satisfied. 
Therefore, the result \eqref{eq:weqs} in Theorem~\ref{prop:thm3} indicates that if there is no observation noise, i.e., $\bm{w}[k]=0$, $k=0,\ldots, T$,  then the clock residuals under the generalized JST-algo are equalized to the averaged atomic time ${\rm TA}[k]$ for all the clocks, which is consistent with the analysis in Section~\ref{sec:Weighting}. 
Meanwhile,  the result also indicates that sophisticated measuring equipment (hardware) is required in the implementation of the generalized JST-algo to guarantee stability of the generated local time (otherwise the clock residuals may be significantly different to each other and hence instability may be risen when one of the clocks leaves the ensemble). 
\end{remark}   

\begin{remark}   
Theorem~\ref{prop:thm3} indicates that JST-algo can guarantee lower covariance of the clock residual $\bm{\epsilon}_1$ than CKF-algo when the observation noise is small enough. 
More precisely, if the actual covariance $R$ of the observation noise $\bm{w}[k]$ satisfies 
\begin{align}   \label{eq:condition_R}
R-\bm H_{\rm o}\hat{\bm P}_{\rm ss}\bm H_{\rm o}^{\tr}<0
\end{align}
then the covariance of the clock residuals $\bm{\epsilon}_1[k]$ of the generalized JST-algo is never larger than CKF-algo as $k\to \infty$ due to
\begin{align}  \nonumber
&\lim_{k\to \infty}
\left\{\mathbb C\Big[\bm{\epsilon}_1^{\rm JST}[k]\Big]-\mathbb C\Big[\bm{\epsilon}_1^{\rm CKF}[k]\Big] 
\right\}
\\ \nonumber
&=\lim_{k\to \infty}\left\{\mathbb C\Big[-\overline{V}^{\dagger}\bm{w}[k]\Big]
-\mathbb C\Big[\overline{V}^{\dagger} \bm H_{\rm o}\bm{\epsilon}^{\rm CKF}_{\rm o}[k]\Big]\right\}
\\   \label{eq:condition_R_skech}
&=\overline{V}^{\dagger}\Big(R- \bm H_{\rm o}\hat{\bm P}_{\rm ss}\bm H_{\rm o}^{\tr}\Big)\bigl(\overline{V}^{\dagger} \bigr)^{\tr}
\leq 0.
\end{align}
where $\hat{\bm P}_{\rm ss}$ is the steady-state covariance of \eqref{eq:Lk_PKF}
satisfying 
the algebraic Riccati equation given by 
\begin{align}  \nonumber
0=
&-\bm F_{\rm o}[k] \hat{\bm P}_{\rm ss}\bm{H}_{\rm o}^{\sf T}
\bigr(\bm{H}_{\rm o}\hat{\bm P}_{\rm ss}{\bm H}_{\rm o}^{\sf T}+\hat R\bigr)^{-1}
{\bm H}_{\rm o}\hat{\bm P}_{\rm ss}\bm F_{\rm o} [k] ^{\sf T}\\ \label{eq:Riccati}
&+\bm F_{\rm o}[k] \hat{\bm P}_{\rm ss}\bm F_{\rm o}[k] ^{\sf T}-\hat{\bm P}_{\rm ss}+W_{\rm o}.
 \end{align}
 In such a case, combining the result of Theorems~\ref{prop:partial_kalman_obse} and~\ref{prop:thm3}, the generalized JST-algo is hence considered as a better algorithm than CKF-algo for homogeneous ensembles.
 This is because in such a case, the clock residual $\epsilon_1$ of clock 1 satisfies  
\begin{align}  \nonumber
&\lim_{k\to \infty}
\left\{\mathbb C\Big[\epsilon_1^{\rm JST}[k]\Big]-\mathbb C\Big[\epsilon_1^{\rm CKF}[k]\Big] 
\right\}\\ \nonumber
&=\lim_{k\to \infty}e_i
\left(\mathbb C\Big[\bm{\epsilon}_1^{\rm JST}[k]\Big]\!-\!\mathbb C\Big[\bm{\epsilon}_1^{\rm CKF}[k]\Big] 
\right) 
e_i^{\tr}
\\  \label{eq:indival_better}
&=e_i\overline{V}^{\dagger}\Big(R- \bm H_{\rm o}\hat{\bm P}_{\rm ss}\bm H_{\rm o}^{\tr}\Big)\bigl(\overline{V}^{\dagger} \bigr)^{\tr}e_i^{\tr}\leq0,
\end{align}
where $e_i=[e_i^j]_{j=1,\ldots,m}$ denotes standard basis given by $e_i^i=1$,  $e_i^j=0$, $j\not=i$, e.g., $e_1=[1\ 0\ \ldots \ 0]$.
\end{remark}

In the case if the covariance $R$ of the observation noise $\bm{w}[k]$ is too large to satisfy the condition
\eqref{eq:condition_R}, the next result can be used to further compare the variance of clock residual of a specific clock 
using the steady-state covariance.

\begin{theorem}  \label{prop:thm4}
Consider the system model \eqref{eq:Ndmodel}  for an $m$-clock ensemble.
For a given initial guess $\hat{\bm x}[0]$,
if the conditions of Theorem~\ref{prop:thm3} are all satisfied but with possible non-zero observation noises, i.e., $R\geq0$, 
then the variance of  residual $\epsilon_i$ of clock $i$ of the generalized JST-algo and CKF-algo satisfy
\begin{equation}  
\lim_{k\to \infty}
\left\{\mathbb C\Big[\epsilon_i^{\rm JST}[k]\Big]-\mathbb C\Big[\epsilon_i^{\rm CKF}[k]\Big]
\right\} <0
\end{equation} 
if and only if the covariance $R$ of the observation noise $\bm{w}[k]$ satisfies 
\begin{align}   \label{eq:indival_better_cond}
\mathcal L_i:=e_i\overline{V}^{\dagger}\Big(R- \bm H_{\rm o}\hat{\bm P}_{\rm ss}\bm H_{\rm o}^{\tr}\Big)\bigl(\overline{V}^{\dagger} \bigr)^{\tr}e_i^{\tr}<0.
\end{align}
\begin{proof} 
The result is a direct consequence of Theorem~\ref{prop:thm3} with \eqref{eq:condition_R_skech} and  \eqref{eq:indival_better}.
\end{proof} 
\end{theorem}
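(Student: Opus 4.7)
The plan is to invoke Theorem~\ref{prop:thm3} to express the two clock residuals as explicit linear combinations of orthogonal stochastic components, and then read off the $(i,i)$ entry of the covariance difference. From \eqref{eq:weqs}--\eqref{eq:erro_ind_CKF},
\[
\bm{\epsilon}^{\rm JST}_1[k]-\bm{\epsilon}^{\rm CKF}_1[k]
=-\overline{V}^{\dagger}\bm{w}[k]-\overline{V}^{\dagger}\bm H_{\rm o}\bm{\epsilon}^{\rm CKF}_{\rm o}[k],
\]
so both residuals share the common unobservable piece $C\bm{\epsilon}^{\rm CKF}_{\rm \bar o}[k]\mathds{1}_m$; the task reduces to comparing the covariances of the remaining terms.

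First, I would verify that the cross-covariances vanish. The unobservable error $\bm{\epsilon}^{\rm CKF}_{\rm \bar o}$ in \eqref{eq:CKF_obar} is driven only by $(I_n\otimes\mathds{1}_m^{\tr})\bm{v}$, while $\bm{\epsilon}^{\rm CKF}_{\rm o}$ in \eqref{eq:CKF_ob} is driven by $(I_n\otimes\overline{V})\bm{v}$ and $\bm{w}$. Under $\hat W=Q\otimes I_m$, the identity $\overline{V}\mathds{1}_m=0$ gives $(I_n\otimes\overline{V})(Q\otimes I_m)(I_n\otimes\mathds{1}_m)=0$, so $\bm{\epsilon}^{\rm CKF}_{\rm o}$ and $\bm{\epsilon}^{\rm CKF}_{\rm \bar o}$ are uncorrelated; independence of $\bm{w}[k]$ from all $\bm{v}[\cdot]$ history and from $\bm{\epsilon}^{\rm CKF}_{\rm \bar o}[k]$ (which uses no observation noise) kills the remaining cross terms. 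Consequently, for every $k$,
\[
\mathbb C\!\bigl[\bm{\epsilon}_1^{\rm JST}[k]\bigr]-\mathbb C\!\bigl[\bm{\epsilon}_1^{\rm CKF}[k]\bigr]
=\overline{V}^{\dagger}R\bigl(\overline{V}^{\dagger}\bigr)^{\!\tr}
-\overline{V}^{\dagger}\bm H_{\rm o}\hat{\bm P}_k\bm H_{\rm o}^{\tr}\bigl(\overline{V}^{\dagger}\bigr)^{\!\tr}.
\]

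Next, because the sampling interval is constant and $(\bm F_{\rm o},\bm H_{\rm o})$ is observable, the Riccati recursion \eqref{eq:Lk_PKF} converges to the stabilizing solution $\hat{\bm P}_{\rm ss}$ of \eqref{eq:Riccati}, so passing to $k\to\infty$ and then sandwiching by $e_i$ on the left and $e_i^{\tr}$ on the right yields exactly
\[
\lim_{k\to\infty}\!\Bigl\{\mathbb C\!\bigl[\epsilon_i^{\rm JST}[k]\bigr]-\mathbb C\!\bigl[\epsilon_i^{\rm CKF}[k]\bigr]\Bigr\}
=\mathcal L_i.
\]
The strict inequality $\mathcal L_i<0$ is therefore equivalent to the claimed strict covariance gap, proving both directions simultaneously.

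The only delicate step is the orthogonality bookkeeping in the first paragraph: one must check that all cross-terms between the three stochastic pieces ($\bm{w}[k]$, $\bm H_{\rm o}\bm{\epsilon}^{\rm CKF}_{\rm o}[k]$, and $C\bm{\epsilon}^{\rm CKF}_{\rm \bar o}[k]\mathds{1}_m$) vanish so that the covariance difference collapses to the clean expression above. Once that is in place, the remaining argument is essentially algebraic and the stated necessary-and-sufficient condition follows immediately from the scalar sign of $\mathcal L_i$.
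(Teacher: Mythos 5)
Your proposal is correct and follows essentially the same route as the paper: the paper's proof simply cites the computations \eqref{eq:condition_R_skech} and \eqref{eq:indival_better} from the preceding remark, which are exactly your steps of subtracting the two residual expressions from Theorem~\ref{prop:thm3}, dropping the common unobservable term, passing $\hat{\bm P}_k\to\hat{\bm P}_{\rm ss}$ via Riccati convergence, and sandwiching by $e_i$. Your explicit verification that the cross-covariances vanish (using $\overline{V}\mathds{1}_m=0$ under the homogeneous noise structure and the independence of $\bm w$ from the system-noise history) is a detail the paper leaves implicit, but it is the same argument, not a different one.
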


\section{Numerical Simulations} \label{sec:example}
This section provides a couple of examples to demonstrate our results.
In particular, we  use a 5-clock ensemble (Example 1) with homogeneous second-order clocks to verify equivalence result in Theorem~\ref{prop:partial_kalman_obse}, and  use a 3-clock ensemble (Example 2) with third-order clocks to verify the result of Theorems~\ref{prop:thm3}  and  \ref{prop:thm4}  in terms of clock residuals.
\vspace{-5pt}
\subsection{Example 1: Second-order Clocks} 
Consider a second-order homogeneous atomic clock ensemble with $m=5$ clocks where variances of the system noises are set to  
 $\sigma_1^j=2.0587e-20$,  $\sigma_2^j=4.0760e-28$ for all the clocks.  
The sampling period is set to $\tau=0.1$s.
The variances of observation noises are set to $1e-12$ for all the clocks, i.e., $ R=1e-12I_{4}$.
In the simulation, the initial state $\bm{x}[0]$ of this 5-clock ensemble is set to a deterministic value around ${x}_{i}^j[0]\in (1e-15,2e-15)$.
The initial predicted value is set to $\hat{\bm{x}}[0]=1e-15\mathds{1}_5$. 
The guess of the  state noise (resp., measurement) covariance is set to the same as the actual one satisfying $\hat W=Q\otimes I_m$
for some $Q\geq0$ (resp., $\hat R=R$).
We let $T=36000$ so that we can discuss the performance of the algorithms in one hour.

\begin{figure}
 \centering
\includegraphics[width=8.4cm]{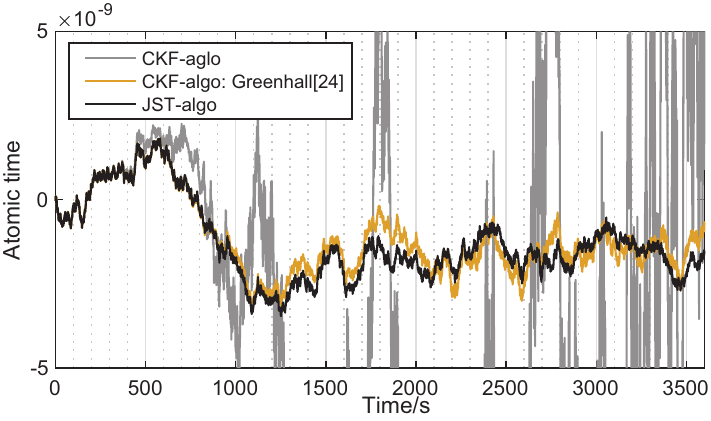}
\caption{The averaged atomic time ${\rm TA}[k]$ under JST-algo, CKF-algo, and CKF-algo \cite{greenhall2006kalman} with equal weights.}\label{TII_example1_different_initial} \label{TA_1}\vspace{-6pt}
\end{figure}

\subsubsection{Equal Wights}
In the case of equal weights for the clocks, i.e., $\beta=\tfrac{1}{5} \mathds{1}_5$,
the averaged atomic time  ${\rm TA}[k]$ of JST-algo is illustrated as the black line in Fig.~\ref{TA_1}
where the one simulated by CKF-algo with $\bm P_0=1e-8I$ is shown as the grey line. 
In this case, it follows from Theorem~\ref{prop:partial_kalman_obse} that CKF-algo and JST-algo ideally generate the same averaged atomic time  ${\rm TA}[k]$ at least if there are no calculation errors. 
The averaged atomic time  ${\rm TA}[k]$  of CKF-algo with the covariance reduction method \cite{greenhall2012review} is illustrated as the yellow line in Fig.~\ref{TA_1}.
It can be seen from this figure that the covariance reduction method is able to suppress the behavior of numerical instability of CKF-algo in real implementation so that the generated averaged atomic time is close to the theoretical value (or, equivalently, the value of JST-algo).
However, this method can not completely avoid numerical instability (see the different overlapping Allan deviation of the averaged atomic time of CKF-algo \cite{greenhall2012review,greenhall2006kalman} and JST-algo represented by the yellow and black lines in Fig.~\ref{Allan_1}).
In terms of the short time performance in 5 minutes, it can be seen from Fig.~\ref{Allan_1} that the overlapping Allan deviation of CKF-algo coincides with JST-algo (see the dashed line and the black markers). 
This is because the calculation errors are negligible at the beginning of calculations and hence the averaged atomic times are almost the same under CKF-algo and JST-algo in real implementation.

\begin{figure}
\centering
\includegraphics[width=8.5cm]{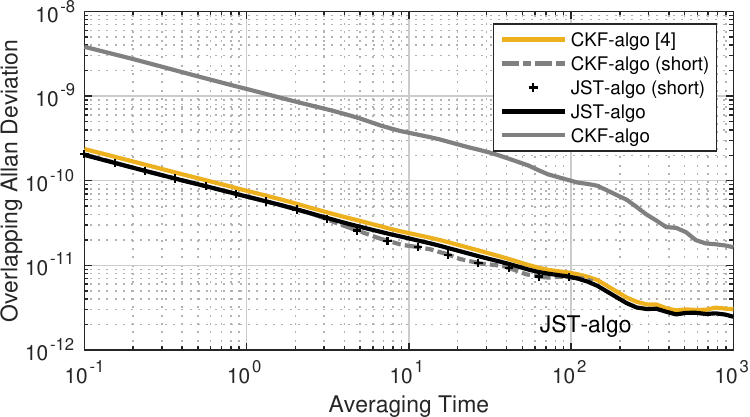} 
\caption{Overlapping Allan deviations of the time scale of  JST-algo, CKF-algo, and CKF-algo \cite{greenhall2012review} with equal weights.}\label{Allan_1}\vspace{-6pt}
\end{figure}

\begin{figure}
 \centering
\includegraphics[width=8.4cm]{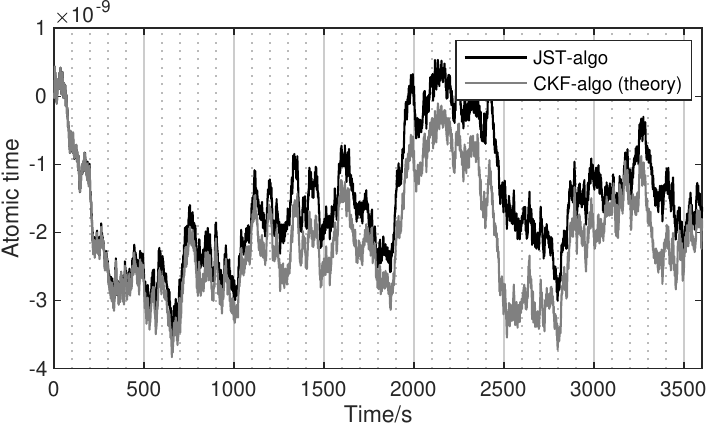} 
\caption{
The averaged atomic time ${\rm TA}[k]$ under JST-algo and CKF-algo (theory) with non-equal weights
}
\label{TA_2}\vspace{-5pt}
\end{figure}

\subsubsection{Nonequal Wights} Now, we consider the case with non-equal weights.
Let $ \beta=( 0.250, 0.375,0.125,0.125,0.1250)^\tr$, it follows from Theorem~\ref{prop:partial_kalman_obse} that since the necessary condition $\beta=\tfrac{1}{m} \mathds{1}_m$ for equivalence is not satisfied, CKF-algo and JST-algo can not generate the same averaged atomic time ${\rm TA}[k]$. 
This fact can be verified by the averaged atomic time shown in Fig.~\ref{TA_2}, where the black and grey lines correspond to ${\rm TA}[k]$ of JST-algo and CKF-algo in theory, respectively. 

\subsubsection{Runtime} Finally, it is interesting to note that JST-algo is superior to CKF-algo in the runtime when there are a number of atomic clocks in the ensemble. 
Figure~\ref{runtime} shows the runtime of JST-algo and CKF-algo versus the number $m$ of the clocks. 
It can be seen from the figure that the runtime of CKF-algo is likely to be exponentially increased when we increase the size of the ensemble, but the runtime of JST-algo is almost the same when increasing the number of clocks from 2 to 20.

\begin{figure}
 \centering
\includegraphics[width=7.9cm]{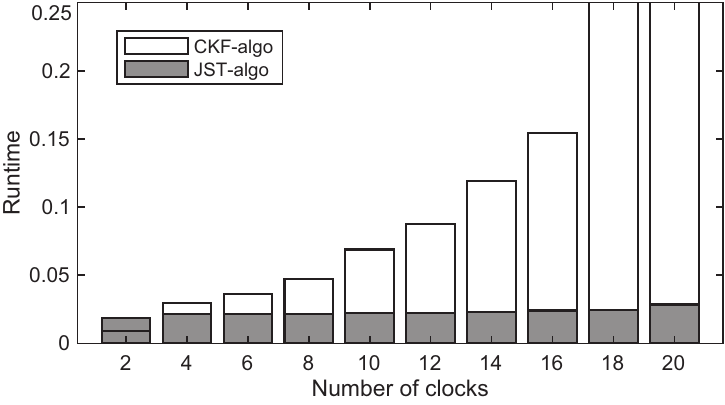} 
\caption{The runtime of CKF-algo and JST-algo versus the number $m$ of the clocks, where each runtime is taken as the averaged value of the runtimes of 500 simulations.}
\label{runtime}\vspace{-6pt}
\end{figure}

\begin{figure}
\centering
		\includegraphics[width=8cm]{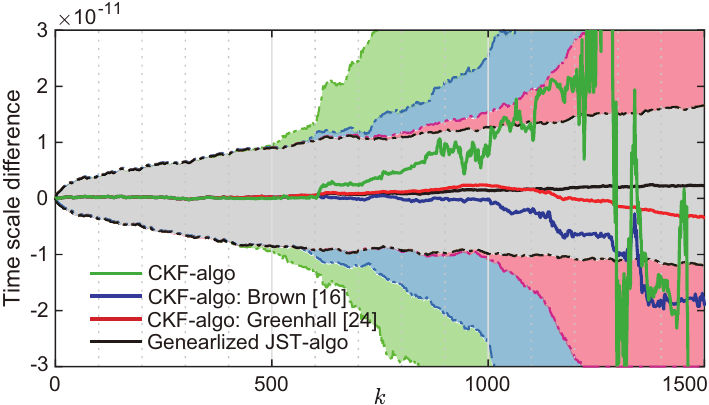} 
\caption{98$\%$-confidence interval of ${\rm TA}[k]$ under the CKF-algo, CKF-algo with Brown and Greenhall's correction, and the generalized JST-algo.
The solid line represents the mean of ${\rm TA}[k]$ in 50 times of simulations.}\label{TII_example2_confidence}\vspace{-6pt}
\end{figure}

\vspace{-8pt}
\subsection{Example 2: Third-order Clocks} 
Consider a third-order homogeneous atomic clock ensemble with $m=3$ clocks where variances of the system noises are set to  
 $\sigma_1^j=9e-26$,  $\sigma_2^j=7.5e-34$, and $\sigma_3^j=1e-47$ for all the clocks.  
The sampling period is set to $\tau=1$s.
In the simulation, both of the initial state $\bm{x}[0]$ and the guess of the initial state $\hat{\bm{x}}[0]$ of this 3-clock ensemble are set to   $\hat{\bm{x}}[0]={\bm{x}}[0]=1e-28\mathds{1}_9$. 
The guess of the  state noise covariance is set to the same as the actual one satisfying $\hat W=Q\otimes I_m$
for some $Q\geq0$.
Furthermore, we let  $\beta=\tfrac{1}{3} \mathds{1}_3$.

\begin{figure}
 \centering
\includegraphics[width=8.3cm]{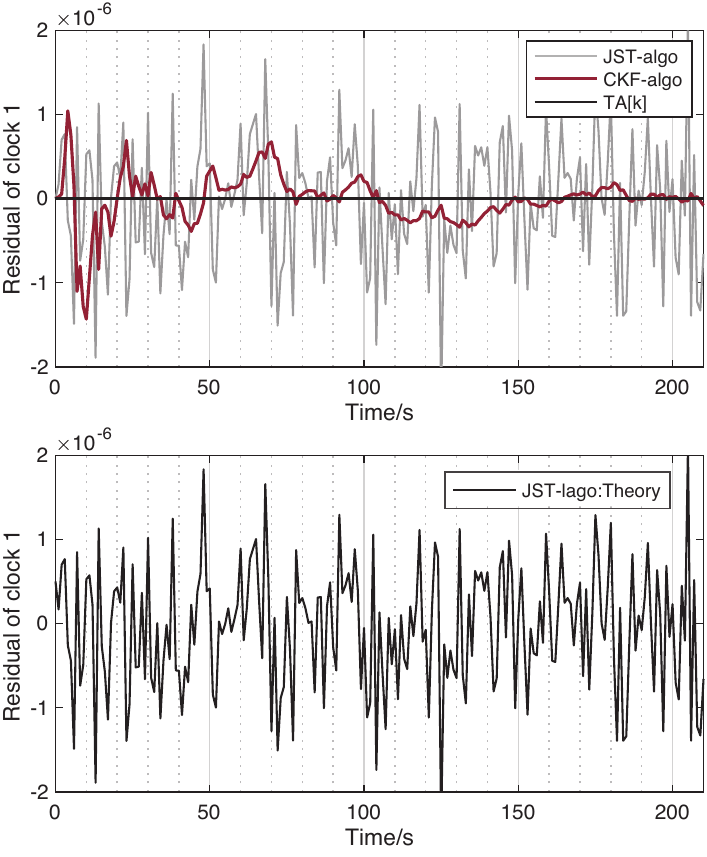} 
\caption{The actual and theoretical residual of clock 1 under the CKF-algo and the generalized JST-algo with large observation noises. The theoretical value for JST-algo is calculated by  \eqref{eq:weqs} in Theorem~\ref{prop:thm3}.}
\label{residual}\vspace{-10pt}
\end{figure}

\subsubsection{Confidence Interval of Averaged Atomic Time} 
The confidence interval of the averaged atomic time ${\rm TA}[k]$ 
 is shown in Fig.~\ref{TII_example2_confidence} with the variance of observation noises being $1e-12$ for all the clocks (i.e., $R=1e-12I_{3}$) to compare the generalized JST-algo, CKF-algo, and CKF-algo with Brown and Greenhall's correction. It can be seen from the figure that even though the confidence intervals of those algorithms coincide with each other in the early stage but they are diverse from each other in the later stage. 
 The generalized JST-algo can further narrow the confidence interval from CKF-algo with Brown's correction \cite{brown1991theory} and  Greenhall's correction   \cite{greenhall2006kalman}, meaning that numerical stability is improved.
 
\subsubsection{Clock Residual With Large Observation Noise} 
Note that $R=1e-12I_{3}$ is too large to satisfy the condition \eqref{eq:condition_R}.
It can be calculated that $\mathcal L_1=\mathcal L_2= 5.56e-13$, $\mathcal L_3= 2.22e-13$. 
Thus, since the inequality \eqref{eq:indival_better_cond} holds with the opposite signs,  
it follows from Theorem~\ref{prop:thm4} that CKF-algo is better than the generalized JST-algo in generating smaller variances of residual for all the 3 clocks. 
This result can be verified by the residual of clock 1 illustrated in Fig.~\ref{residual} where the red (resp., grey) line represents the one under CKF-algo (resp., JST-algo) with  $\bm P_0=1e-13I$.
It can be seen from this figure that the residual of clock 1 under the generalized JST-algo is exactly the same as the theoretical value calculated by  \eqref{eq:weqs}, which verifies Theorem~\ref{prop:thm3}.

\begin{figure}
 \centering
\includegraphics[width=8.3cm]{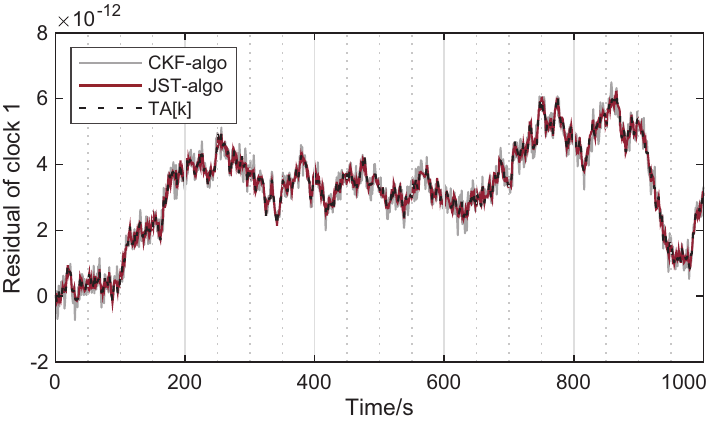} 
\caption{The actual residual of clock 1 under the CKF-algo and the generalized JST-algo with tiny observation noises. }
\label{regions2}\vspace{-8pt}
\end{figure}

\subsubsection{Clock Residual With Small Observation Noise}
Let the variance of observation noises be set to $1e-27$ for all the clocks, i.e., $R=1e-27I_{3}$, so that the conditions \eqref{eq:condition_R}, \eqref{eq:indival_better_cond} are satisfied with $\mathcal L_1=\mathcal L_2=-6.0000e-26$, and $\mathcal L_3=-6.0005e-26$.
It follows from Theorem~\ref{prop:thm4} that the generalized JST-algo is better than CKF-algo in generating smaller variances of individual residuals for all the 3 clocks.  
Without loss of generality, the result for clock 1 can be verified by the grey and red lines representing the residual of clock 1 of CKF-algo and the generalized JST-algo in Fig.~\ref{regions2}. 
In addition, we note that the red line in Fig.~\ref{regions3} represents theoretical difference $\mathcal L_1=\lim_{k\to \infty}
\{\mathbb C[\epsilon_1^{\rm JST}[k]]-\mathbb C[\epsilon_1^{\rm CKF}[k]] 
\}$ between the generalized JST-algo and CKF-algo in residual of clock 1, 
which is close to the actual value with many stochastic paths and hence verifies the results \eqref{eq:condition_R_skech} and \eqref{eq:indival_better}.

\begin{figure}
 \centering
\includegraphics[width=8.3cm]{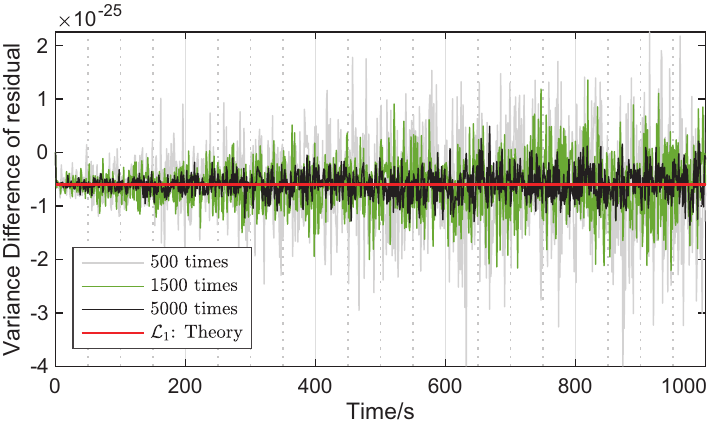} 
\caption{The actual and theoretical difference $\mathcal L_1$ between JST-algo and CKF-algo in the residual of clock 1. }
\label{regions3}\vspace{-6pt}
\end{figure}

\begin{figure}
 \centering
\includegraphics[width=8.3cm]{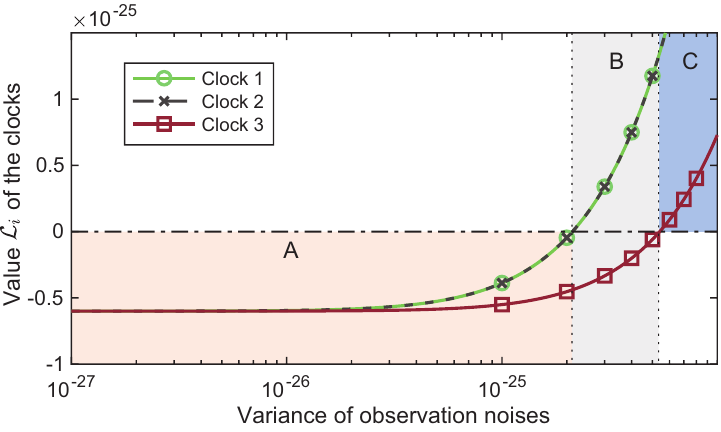} 
\caption{The value of $\mathcal L_i$ for clock $i=1,2$, and $3$ versus the variance $R=rI_{3}$ of observation noises. In the region A (resp., C), $\mathcal L_i<0$ (resp., $>0$) for all the clocks and hence the generalized JST-algo is better (resp., worse)  than CKF-algo in the individual residuals. In region B, the facts $\mathcal L_i>0$,  $i=1,2$, and $\mathcal L_3<0$ indicate that the generalized JST-algo is worse than CKF-algo in the individual residuals for clocks 1 and 2 but better for clock 3. }
\label{regions}\vspace{-6pt}
\end{figure}

\subsubsection{Discussion on Size of Observation Noise}\label{sec:Discussion}
Now we briefly discuss the relation between the variance $R=rI_{3}$ of observation noises and the superiority of the generalized JST-algo compared to CKF-algo.  
It can be seen from Fig.~\ref{regions} that when the observation noises are small enough (e.g., $r$ is in the region $A$ of the figure), $\mathcal L_i<0$ holds for all the clocks and hence the generalized JST-algo is better than CKF-algo in each of the individual residuals (even though the averaged atomic times ${\rm TA}[k]$ of the two methods are identical to each other as we discussed in Theorem~\ref{prop:partial_kalman_obse}). 
Alternatively, if the observation noises are large enough (e.g., $r$ is in the region $B$ of the figure), $\mathcal L_i>0$ holds for all the clocks and hence the generalized JST-algo is worse than CKF-algo in each of the individual residuals. 
Therefore, recalling the discussion in Section~\ref{sec:Discussion} about the runtime of JST-algo and CKF-algo, when the observation noises are small enough in such a homogeneous clock ensemble, it is suggested to imply the generalized JST-algo (instead of CKF-algo) for time scale generation especially in the case when the number of the clocks is large so that both the individual residuals and the runtime of the algorithms can be reduced.

\section{Conclusion} \label{sec:conc}
In this paper, we studied the comparison between two time scale generation algorithms using atomic clock ensembles, that are, CKF-algo and JST-algo.
We presented a generalized JST-algo via the state-space model of the higher-order atomic clock ensemble, where the proposed generalized JST-algo is reduced to the existing JST-algo for second-order clocks. 
By revealing the theoretical expressions of the averaged atomic times, we discussed the relation between the generalized JST-algo and CKF-algo. 
It is found that even though the measurement signal is not filtered,  
JST-algo can yield the averaged atomic times independent to the observation noise. 
The prediction error of Kalman filtering algorithm was rigorously shown by using the prediction error regarding an observble state space.
We proved that the generalized JST-algo is equivalent to CKF-algo in the sense of generating averaged atomic time if and only if equal averaging weights are considered for the atomic clocks when the covariance matrices of system noises are identical for all the clocks.

In such a homogeneous clock ensemble, we further revealed the theoretical relation between the generalized JST-algo and CKF-algo in the individual clock residuals and presented the sufficient and necessary condition for observation noises to determine which algorithm can generate the clock residuals with smaller variances. 
We discussed the relation between the runtime of the algorithms versus the number of clocks in one of the numerical examples. 
It is found that if the observation noises are tiny enough by some sophisticated measuring equipment, one is suggested to imply the generalized JST-algo (instead of CKF-algo) for time scale generation since the generalized JST-algo can generate smaller clock residuals and the calculation cost is lower than CKF-algo if the number of the atomic clocks is large.

\vspace{-6pt}

\bibliographystyle{IEEEtran} 
 
\bibliography{Kalman}

\appendix
\subsubsection*{Deviation of \eqref{eq:PKA2_o}} 
Note that
\begin{equation}\nonumber
( I_n \otimes \overline{V})  \bm P_k \bm H^{\sf T} 
= \hat{\bm P}_k  ^{\tr}  \bm H_{\rm o}^{\sf T}  , \ \ 
 \bm H    \bm P_k \bm H^{\sf T} 
=  \bm H_{\rm o}  \hat{\bm P}_k ^{\sf T}  \bm H_{\rm o}^{\sf T}.
\end{equation}
imply that the Kalman gain in observable state is given by
\begin{align} \nonumber
 \hat{\bm K}_k=& ( I_n \otimes \overline{V})\bm K_k  = ( I_n \otimes \overline{V})     \bm P_k {{\bm H} ^{\sf T}\bigr(\bm{H}\bm P_k{ \bm H}^{\sf T}+R\bigr)^{-1}} \\  
&\qquad\quad\  \quad\quad=  \hat{\bm P}_k \bm H_{\rm o}^{\sf T} (\bm H_{\rm o} \hat{\bm P}_k \bm H_{\rm o}^{\sf T}+R)^{-1}
\end{align}
whereas the covariance is given by 
\begin{align} \nonumber
 \hat{\bm P}_{k+1}    =&
( I_n \otimes \overline{V})\bm F[k] \bm P_k\bm F[k]^{\sf T} ( I_n \otimes \overline{V}^{\sf T})+\hat W_{\rm o}    \\ \nonumber
& - (I_n \otimes \overline{V}) \bm F[k] \bm K_k {\bm H}\bm P_k \bm F[k]^{\sf T}( I_n \otimes \overline{V}^{\sf T})  \\ \nonumber
= & \bm F_{ \rm o}[k]( I_n \otimes \overline{V}) \bm P_k ( I_n \otimes \overline{V}^{\sf T})\bm F_{ \rm o}[k]^{\sf T}+\hat W_{\rm o} \\ \nonumber
& -\bm F_{ \rm o}[k]( I_n \otimes \overline{V})\bm K_k \bm H \bm P_k ( I_n \otimes \overline{V}^{\sf T})\bm F_{ \rm o}[k]^{\sf T}\\  
= &\bm F_{ \rm o}[k](  \hat{\bm P}_{k}  -\bm N_k    \bm H_{\rm o}  \hat{\bm P}_{k} ) \bm F_{ \rm o}[k]^{\sf T}  +\hat W_{\rm o} .
\end{align}
Thus, the equation \eqref{eq:PKA2_o} is obtained since we have
\begin{align} \nonumber
 \hat{\bm \xi}_{\rm o} [k+1]    
 =& ( I_n \otimes \overline{V}) \bm F[k] \hat{\bm x}[k] + ( I_n \otimes \overline{V})\bm K_k(\bm{y}[k]-\bm{H}\hat{\bm x}[k]) \\  
  =&  \bm F_{ \rm o}[k]  \hat{\bm \xi}_{\rm o} [k] +\hat{\bm K}_k ( \bm y[k]-\bm{H}_{\rm o} \hat{\bm{\xi}}_{\rm o}[k]).
\end{align}

\vspace{-10pt}
 \begin{IEEEbiography}{Yuyue Yan}
(S'19--M'22) received the B.Eng. degree in electronic information engineering from Xiamen University Tan Kah Kee College,  Fujian, China, in 2017, and the M.E. degree and the Ph.D. degree in systems and control engineering from Tokyo Institute of Technology, Tokyo, Japan, in 2019 and 2022, respectively.

He is currently a Research Fellow with the Department of Systems and Control Engineering, Tokyo Institute of Technology. His research interests include stability of noncooperative dynamical system, transportation system, and distributed time synchronization with atomic clock ensembles.
\end{IEEEbiography}
\vspace{-10pt}
\begin{IEEEbiography}{Takahiro Kawaguchi}(Member, IEEE)
received the B.Sc., M.Sc., and Ph.D. degrees in engineering from Keio University, Tokyo, Japan, in
2011, 2013, and 2017, respectively.
From 2013 to 2015, he was with the Toshiba Research and Development Center. From 2017 to 2019, he was a Researcher with the Department of Systems and Control Engineering, School of Engineering, Tokyo Institute of Technology, Tokyo. 
From 2019 to 2020, he was a specially appointed Assistant Professor with the Department of Systems and Control Engineering, Tokyo Institute of Technology. He is currently an Assistant Professor with the Division of Electronics and Informatics, Graduate School of Science and Technology, Gunma
University, Gunma, Japan. 
His research interests include system identification
theory and the application of machine learning techniques.
Dr. Kawaguchi is a member of the Society of Instrument and Control Engineers and the Institute of System, Control, and Information Engineers.
\end{IEEEbiography}
\vspace{-10pt}
\begin{IEEEbiography}{Yuichiro Yano}
(Member, IEEE)
received Ph.D. in engineering from Tokyo Metropolitan University in 2015. 
From April 2014 to March 2016, he was research fellowship for young scientists at Japan Society for the Promotion of Science (JSPS). 
From April 2016, he has worked as tenure-track researcher with National Institute of Information and Communications Technology (NICT), Tokyo, Japan. Since April 2019, he has been a permanent researcher with same institute.
\end{IEEEbiography}

\begin{IEEEbiography}{Yuko Hanado}
received BS and MS degrees in Science from Tohoku University in 1987 and 1989 respectively, and Ph.D. degrees in Graduated school of Information Systems from the University of Electro-Communication in 2008. 
In 1989 she joined NICT and is currently the Director General of Electromagnetic Standards Research Center in Radio Research Institute at NICT. 
She has been engaged in the work for time and frequency standards, and especially has interests to algorithm of making ensemble atomic timescale. 
She was the recipient of Award of the Commendation for Science and Technology by The Minister of Education, Culture, Sports, Science and Technology (MEXT) in 2013.
\end{IEEEbiography}

 \begin{IEEEbiography}{Takayuki Ishizaki} 
(Member, IEEE)
received the B.Sc., M.Sc., and Ph.D. degrees in Engineering from Tokyo Institute of Technology, Tokyo, Japan, in 2008, 2009, and 2012, respectively.
Since November 2012, he has been with Tokyo Institute of Technology, where he is currently an Associate Professor at the Department of Systems and Control Engineering. 
His research interests include network systems control,  power systems applications, and distributed time synchronization with atomic clock ensembles.

He was the recipient of awards including Pioneer Award of Control Division from The Society of Instrument and Control Engineers (SICE) in 2019, IEEE Control Systems Magazine Outstanding Paper Award from IEEE Control Systems Society (IEEE CSS) in 2020, and The Young Scientists' Award of the Commendation for Science and Technology by The Minister of Education, Culture, Sports, Science and Technology (MEXT) in 2021.
\end{IEEEbiography}

\end{document}